	\def\cite#1{<#1>}%
\pgfplotsset{compat=1.18}
\def\vs{\vspace{1\baselineskip}}
\newcommand{\scriptf}{\mathcal{F}}
\newcommand{\scriptr}{\mathcal{R}}
\newcommand{\om}{\(\Omega\)}
\newcommand{\reals}{\mathbb{R}} % the real numbers
\newcommand{\naturals}{\mathbb{N}} % the natural numbers
\newcommand{\paren}[1]{\left(#1\right)} % scaled parentheses
\newcommand{\E}{\mathbb{E}}
\DeclareMathOperator{\Var}{\mathrm{Var}}
\renewcommand{\P}{\mathbb{P}}
\newcommand{\matr}[1]{\boldsymbol{#1}}
\def\preceqdot{\mathrel{\preceq\kern-.5em\raise.22ex\hbox{\(\cdot\)}}}
\theoremstyle{plain}% default
\newtheorem{lem}{Lemma}
\newtheorem{thm}{Theorem}
\newtheorem{cor}{Corollary}
\author{Olle Abrahamsson, Danyo Danev and Erik G. Larsson 
	\thanks{The authors are with the Department of Electrical Engineering (ISY), Link{\"o}ping University, 58183 Link{\"o}ping, Sweden  (e-mail:  \{olle.abrahamsson, danyo.danev, erik.g.larsson\}@liu.se). This work was supported in part by ELLIIT and the KAW foundation.}
}
\title{Strong Convergence of a Random Actions Model in Opinion Dynamics}
\begin{document}
	\maketitle
	\begin{abstract}
		We study an opinion dynamics model in which each agent takes a random Bernoulli distributed action whose probability is updated at each discrete time step, and we prove that this model converges almost surely to consensus. We also provide a detailed critique of a claimed proof of this result in the literature. We generalize the result by proving that the assumption of irreducibility in the original model is not necessary. Furthermore, we prove as a corollary of the generalized result that the almost sure convergence to consensus holds also in the presence of a stubborn agent which never changes its opinion. In addition, we show that the model, in both the original and generalized cases, converges to consensus also in \(r\)th mean.
	\end{abstract}
	\section{Introduction}
	The study of opinion dynamics in social networks goes back several decades; for a review, see e.g. \cite{proskurnikov2017,proskurnikov2018,dong18}. For an overview of recent publications, see e.g. \cite{noorazar2020}. A class of opinion dynamics models of particular interest in relation to this paper incorporates randomness, for example in terms of random interactions \cite{acemoglu2010,mukhopadhyay2016}, or as in \cite{hunter2018}, where at each time \(t\) a randomly selected agent communicates a random opinion to its neighbors. The latter model also features the interesting novelty that an agent may grow increasingly stubborn over time.
	
	A majority of these dynamical models can be classified into to two categories, depending on if they address discrete-valued or continuous-valued opinions. Therefore, an important milestone in the literature was the CODA (continuous opinion and discrete action) model introduced in \cite{martins08}, which addressed the so-called community cleavage problem posed by Abelson \cite{abelson67}: If so many of the existing models leads to consensus, then why in society is there so much polarization around controversial issues? The CODA model allocates the agents' true, latent opinions to an unobservable continuous space, while the binary actions (e.g., \(0\) or \(1\)) which are observable by each agents' neighbors, are elements in a discrete space. An agent's latent opinion is its probability of taking action \(0\) or \(1\), and this probability is updated based on the observed actions of its neighbors. The updating scheme allows for the latent opinions to reach extreme values in the continuous space, leading to polarization and thereby resolving the cleavage problem.
	
	The CODA model has spawned a multitude of studies of different variations. A recent example is \cite{zino20} where the authors propose a two-layered network. In the first layer, the agents exchange and update their continuous opinions, and in the second layer, the agents observe their neighbors discrete binary actions. The purpose of introducing two layers is that an individual might choose to share his opinion with only his family and a few close friends, but is able to observe the actions from a different subset of his social network. Through this model the authors study the formation of e.g. unpopular norms. The model may help explain social collective behaviors in which a majority of individuals in a community can hold opinions that differ significantly from the actions taken by the community. Another example is \cite{jiao21} in which the CODA model is modified in two major ways: First, they take into account the fact that an agent's interactions are not limited to its neighbors (with online social media services as a striking example), and that an agent likely has some preference for which interactions to participate in. Second, they incorporate a game theoretic mechanism with the justification that the acquisition of an interaction is not necessarily the true observed action, but may be only extrinsic information that can be inferred.
	
	A recent variation of the CODA model which incorporates randomness was given in \cite{scaglione}. Under this setting, at every time step \(t\) each agent \(i\) chooses a Bernoulli distributed random action \(a_{t,i}~\sim~\mathrm{Bernoulli}(x_{t,i})\), and the corresponding update rule is
	\begin{equation}\label{eq:RA_update}
		\matr{x}_{t+1} = (1-\alpha)\matr{x}_t + \alpha\matr{W}\matr{a}_t,
	\end{equation}
	where \(\matr{x}_t\) represents the agents' opinions at time \(t\) and \(\matr{W}\) is an adjacency matrix that encodes the trust between agents. (This is explained in detail in Section \ref{sec:RA}.) In this model, which we will refer to as the Random Actions model (RA model for short), the probabilities of taking an action, rather than the actions/opinions themselves, are updated as a weighted average over the neighbors' actions. This is different from many classic opinion dynamics models such as that of DeGroot \cite{degroot} which assumes that the internal opinions are publicly known without taking into consideration the agents' actions.
		
	Other models from the recent literature that are closely related to the RA model include \cite{arieli21} in which the binary action \(a_{t,i}\) is chosen with a probability that does not match the opinion \(x_{t,i}\) exactly, but instead is a slightly exaggerated version of that opinion, and \cite{zhan21} in which the opinions of an agent's neighbors (as well as the actions of all agents) are taken into consideration in the updating scheme.
	
	Speaking more broadly, there are many models which are closely related to the RA model in that they all concern linear consensus dynamics with random interaction mechanism and/or random topologies. In \cite{fagnani07}, the time-variant sequence of weight matrices is a stochastic process. The authors show that the convergence rate for large scale networks is best described by mean square analysis. A necessary and sufficient condition for convergence to consensus with random weight matrices is given by \cite{tabhaz-salehi08}, which shows that under fairly mild conditions on the random matrices, the model reaches consensus almost surely. In \cite{dimakis10}, gossiping algorithms reaches consensus via pairwise interactions, which effectively results in a random topology. Similarly, both \cite{patterson10} and \cite{kar09} consider linear consensus models with links that fail at random, due to communication errors which, again, effectively results in a random topology. In \cite{patterson10}, the authors quantify the effect of such failures on the performance of the distributed gossip model, by characterizing the convergence rate to consensus. In \cite{kar09} the communication in the distributed gossip network is, in addition, assumed to be noisy, which leads to a trade-off between bias and variance in the limiting consensus. Two algorithms are proposed to mitigate this problem; one based on decaying time-variant weights, and the other based on Monte Carlo averaging.

	\section{Contributions}
	Much of the motivation for this work stems from the RA model in \cite{scaglione} and the claimed proof of Theorem 1 therein, in which all agents are asserted to converge almost surely to consensus. As detailed in Section \ref{sec:critique}, however, we have concerns with some of the arguments in \cite{scaglione}. The purported proof contains several mathematically ambiguous statements. Not only does this make it impossible to verify the proof without making extra assumptions, it is also not clear how to proceed past some of the steps. In addition, some of the arguments in the claimed proof are inaccurate.
	
	Our main contributions are a correct proof of \cite[Theorem 1]{scaglione} together with a generalized result which reveals that the irreducibility assumption in the original theorem is a sufficient but not a necessary condition. As far as we are aware, this is the first analysis of the RA model with a reducible network. Our proof strategy is different, and for the most part we derive the result from first principles by using basic definitions and properties in probability theory. In doing so, we hope that the argumentation in the proof becomes easier to follow.
	
	The remainder of the paper is organized as follows: In Section \ref{sec:model} we define the RA model and its underlying probability space. In Section \ref{sec:results} we provide a rigorous proof of the convergence result. For numerical simulations of the result, see \cite[Section IV A]{scaglione}. Then we provide a complete analysis of the generalized case where the network is reducible, and we prove under what conditions convergence to consensus still holds. In particular we prove that the irreducibility condition is sufficient but not necessary. As a special case we show that almost sure convergence to consensus holds also in the case when one agent is stubborn; that is, an agent who influences others but is never influenced itself. In addition, we show that if the RA model converges almost surely to consensus, then it also converges to consensus in \(r\)th moment, for all \(r > 0\). This is followed by a detailed critique of the proof of \cite[Theorem 1]{scaglione} in Section \ref{sec:critique}. The status of the proofs of various modes of convergence under different assumptions about the RA models are summarized in Table \ref{table:status}. In Section \ref{sec:discussion} we discuss considerations of a modified RA model with a time-variant weight matrix, as well as similarities and differences between the RA model and models in distributed optimization. Finally, we conclude the paper with a discussion about the results and their interpretations in Section \ref{sec:openproblems}.

	\begin{table}
		\caption{Status of proofs of modes of convergence (almost sure, in \(r\)th moment and in probability) towards consensus for the Random Actions model, with and without a stubborn/drifting agent, respectively.}
		\label{table:status}
		\begin{tabular}{c|c|c|c}
			 & a.s. & \(L_r\) & \(p\)\\
			 \hline
			 \textbf{RA}& \makecell{\cite{scaglione}, however \\see our critique\\ in Section \ref{sec:critique}.\\ See our proof \\ of Theorem \ref{thm:main}.}&\makecell{Corollary \ref{cor:RA-moments}}& \makecell{Follows from \\ Theorem \ref{thm:main}.}\\
			\hline
			\thead{RA with stubborn\\or drifting agent}&\makecell{Corollary \ref{cor:RA-as-stubborn}}&\makecell{Corollary \ref{cor:RA-moments}}&\makecell{Follows from \\ Corollary \ref{cor:RA-as-stubborn}.\\ See also our \\
				proof in \cite{abrahamsson2019}.}
		\end{tabular}
	\end{table}
	 %We also illustrate the results with Monte Carlo simulations.
	\section{Models and Definitions}\label{sec:model}
	In all models described in this section, we will consider a directed, weighted, single-component and strongly connected network with \(N\) nodes, where the nodes are interpreted as agents. Before giving the details of the model, let us at this point remind the reader that a \textit{row-stochastic} matrix is a square, non-negative matrix such that the row sums are equal to \(1\). The word ``row'' will be omitted and implied from hereon.
	
	\subsection{The RA Model}\label{sec:RA}
	In the RA model \cite[Equation (1)]{scaglione}, at every time step \(t\in \naturals = \{1,2,\dots\}\), each agent \(i~\in~\{1,2,\dots,N\}\) chooses one of two actions, \(0\) or \(1\), and these actions are generated by a Bernoulli random variable \(a_{t,i}\) with probability \(x_{t,i}\) of choosing action \(1\). The update of these probabilities is governed by \eqref{eq:RA_update}, where \(\matr{x}_1 \in\mathbb{R}^N\) is a column vector representing the initial opinions of the \(N\) agents, \(\alpha\in(0,1)\), the adjacency matrix \(\matr{W}\) is an \(N\times N\) stochastic matrix representing the trusts between agents, and 
	\begin{equation}
	\matr{a}_t = (a_{t,1}, a_{t,2}, \dots, a_{t,N})^T \in \{0,1\}^N
	\end{equation} are the actions with corresponding probabilities
	\begin{equation}
	\matr{x}_t = (x_{t,1}, x_{t,2}, \dots, x_{t,N})^T \in[0,1]^N,
	\end{equation} which themselves are random variables. We note that the only restriction on the initial distribution of \(\matr{x}_0\) is that each element \(x_{0,i}\) satisfies \(0 \leq x_{0,i} \leq 1\).
	
	We use the convention that \(w_{ij}>0\) represents an edge from \(j\) to \(i\) whose weight is equal to the trust that \(i\) puts in \(j\).  Note also that \(\matr{W}\) is irreducible since the network is strongly connected.

	Let the \(N\times 1\) vector with all entries equal to one be denoted by \(\mathbf{1} = \begin{pmatrix}
	1 & 1 & \dots & 1
	\end{pmatrix}^T\), and let \(\matr{\pi} = \begin{pmatrix}
	\pi_1 & \dots & \pi_n
	\end{pmatrix}^T\). Then we have
	\begin{equation}
		\matr{\pi}^T\matr{W} = \matr{\pi}^T, \qquad \matr{W}\mathbf{1} = \mathbf{1},
	\end{equation}
	i.e., \(\matr{\pi}^T\) and \(\matr{1}\) are the left and right eigenvectors of \(\matr{W}\) corresponding to the eigenvalue \(1\). By Perron-Frobenius theorem for irreducible and non-negative matrices (see, e.g., \cite[Theorem 8.4.4]{HorJoh}), we can choose \(\matr{\pi}\) such that \(\mathbf{1}^T\matr{\pi} = 1\) and \(\matr{\pi} \succ \matr{0}\). (That is, all elements of \(\matr{\pi}\) are positive and sum to one).
	\subsection{Probability model}
	We define the probability space \((\Omega',\scriptf',\P')\), with sample space
	\begin{equation}\label{eq:samplespace}
	\Omega' = [0,1]^N\times \paren{\{0,1\}^N}^{\aleph_0},
	\end{equation}
	where \(\scriptf'\) is the event space (\(\sigma\)-algebra) for \(\Omega'\), \(\P'\) is a probability measure and \(\aleph_0 = \lvert \naturals \rvert\).  We have
	\begin{equation}
	\begin{aligned}
	\Omega' = {} & \{(\matr{x}_1, \matr{a}_1, \matr{a}_2, \dots ) : \matr{x}_1 \in [0,1]^N, \matr{a}_t \in \{0,1\}^N, \\
	& t = 1,2,\dots\}.
	\end{aligned}
	\end{equation}
	It would be convenient to be able to reason about  \(\{\matr{x}_t\}_{t=1}^\infty\) as a sequence of random variables, but such sequences are not elements in \(\Omega'\). One way to bypass this obstacle is to utilize the opinion update rule \eqref{eq:RA_update}, since it maps actions \(\matr{a}_t\) in time step \(t\) to opinions \(\matr{x}_{t+1}\) in time step \(t+1\). Hence, we can construct a new probability space whose samples are all possible sequences \(\{\matr{x}_t\}_{t=1}^\infty\) (that are compatible with the model) if we define an appropriate mapping between the two spaces.
	
	In view of \eqref{eq:RA_update}, we define the mapping
	\begin{equation}
	\begin{aligned}
	h: \Omega' \to \paren{[0,1]^N}^{\aleph_0} = {} & \{(\matr{x}_1,\matr{x}_2,\dots)  : \matr{x}_t \in [0,1]^N, \\
	&t=1,2,\dots\},
	\end{aligned}
	\end{equation}
	by
	\begin{equation}
	h(\omega') = h((\matr{x}_1,\matr{a}_1,\dots)) = (\matr{z}_1,\matr{z}_2,\dots),
	\end{equation}
	where
	\begin{equation}
	\begin{aligned}
	\matr{z}_1 &= \matr{x}_1,\\
	\matr{z}_{t+1} &= (1-\alpha)\matr{z}_t + \alpha \matr{W}\matr{a}_t, \quad t=1,2,\dots
	\end{aligned}
	\end{equation}
	For convenience, we will henceforth write \(h(\omega') = (\matr{x}_1,\matr{x}_2,\dots)\) instead of \(h(\omega') = (\matr{z}_1,\matr{z}_2,\dots)\).
	Under this mapping, we in turn define a new probability space \((\Omega,\scriptf,\P)\), based on the original probability space \((\Omega',\scriptf',\P')\), by
	\begin{subequations}
		\begin{equation}
		\Omega = h(\Omega') = \{h(\omega') : \omega' \in 	\Omega'\} \subset \paren{[0,1]^N}^{\aleph_0},
		\end{equation}
		\begin{equation}
		\scriptf = h(\scriptf') = \{h(A) : A\in \scriptf'\},
		\end{equation}
		where
		\begin{equation}
		h(B) = \{h(b) : b\in B\},
		\end{equation}
		and
		\begin{equation}
		\P(B) = \P'(h^{-1}(B)), \quad \forall \ B \in \scriptf,
		\end{equation}
	\end{subequations}
	where \(h^{-1}(B) = \{\omega' \in \Omega': h(\omega')\in B\}\) denotes the preimage of \(B\) under \(h\).
	
	For any event \(A \in \scriptf'\), we define the joint cumulative distribution function (cdf) thusly: Let \(n,m \in \naturals\), and let
	\begin{equation}
	\begin{aligned}
	i &= (i_1,\dots,i_n) \in \naturals^n, \quad t = 	(t_1,\dots,t_n) \in \naturals^n\\
	j &= (j_1,\dots,j_m) \in \naturals^m, \quad u = (u_1,\dots, u_m) \in \naturals^m,
	\end{aligned}
	\end{equation}
	where the elements of \(i\) and \(j\) are agent indices and those of \(t\) and \(u\) are time indices. Then the \((n+m)\)-dimensional random variable
	\begin{equation}
		(x,a) = (x_{t_1,i_1},\dots,x_{t_n,i_n},a_{u_1,j_1},\dots, a_{u_m,j_m})
	\end{equation}
	has joint cdf \(F_{x,a}: [0,1]^n \times \{0,1\}^m \to [0,1]\), defined by
	\begin{equation}
	\begin{aligned}
		F_{x,a}(\matr{x},\matr{a}) = \P'(& h^{-1}(\{\omega \in \Omega : x_{t_k,i_k} \leq x_k, k=1,\dots,n\}) \\
		& \cap \{\omega' \in \Omega' : a_{u_\ell,j_\ell} \leq a_{\ell}, \ell = 1,\dots,m\}),
	\end{aligned}
	\end{equation}
	where \(\matr{x} = (x_1,\dots,x_n) \in [0,1]^n\) and \(\matr{a} = (a_1,\dots,a_m) \in \{0,1\}^m\).
	An example illustrates the construction:\\
	For \(n=m=1\) with \((x,a) = (x_{1,4},a_{2,3})\), we have
	\begin{equation}
	\begin{aligned}
	F_{x_{1,4},a_{2,3}}(\matr{x},\matr{a}) = \P'(&h^{-1}(\{\omega \in \Omega : x_{1,4} \leq x_1\})\\
	 &\cap \{\omega' \in \Omega' : a_{2,3} \leq a_1\}),
	\end{aligned}
	\end{equation}
	where \(x_{1,4},x_1 \in [0,1]\) and \(a_{2,3},a_1 \in \{0,1\}\).
	%We note that \(\Omega \subset \Omega'\), so the map defined by \(h\) is not bijective. This, it turns out, will not be a problem, since the results we prove about this new, smaller space, are shown to hold for the original space as well.
	\section{Results}\label{sec:results}
	Our main result is that the RA model converges to consensus almost surely. The proof strategy is as follows: First we define events which contain undesirable sample paths, and show the the probabilities of these events tend to \(0\) as \(t \to \infty\). After a series of lemmas, whose proofs are found in the appendix, we see that the model converges to consensus in probability. The final push is a theorem which shows that the convergence to consensus holds also almost surely.
	
	With this strategy in mind, let us begin by defining a sequence of special subsets of the sample space \(\Omega\). Via lemmas \ref{lem:conv_bti} to \ref{lem:union_corners} we prove a series of useful facts about the probabilities of those subsets, where each lemma builds on the previous one.
	\vs
	
	First, let \(\delta \in (0,1/2)\), and consider the subsets \(C_{t,i}(\delta) \subset \Omega\) of the sample space, defined for each triple \((t,i,\delta)\) by
	\begin{equation}\label{eq:C_ti_definition}
	C_{t,i}(\delta) = \{(\matr{x}_1,\matr{x}_2,\dots) : x_{t,i} \in [0,\delta) \cup (1-\delta,1]\} \in \scriptf.
	\end{equation}
	The probability of these events converges to \(1\) in the limit as \(t  \to \infty\).
	\begin{lem}\label{lem:conv_bti}
		For any  \(\delta \in (0,1/2)\),
		\begin{equation*}
		\lim_{t\to\infty} \P\paren{C_{t,i}(\delta)} = 1, \quad i=1,\dots,N.
		\end{equation*}
	\end{lem}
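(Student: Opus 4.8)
The plan is to exploit the fact that, although the individual coordinate processes $\{x_{t,i}\}_t$ are not martingales, the $\matr{\pi}$-weighted average $M_t:=\matr{\pi}^T\matr{x}_t$ is a bounded martingale, and to extract from it the estimate $\E[x_{t,i}(1-x_{t,i})]\to 0$; the lemma then follows from Markov's inequality, since $x(1-x)\geq\delta(1-\delta)>0$ on $[\delta,1-\delta]$.

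I would work on the original space $(\Omega',\scriptf',\P')$ equipped with the filtration $\scriptf'_t=\sg{\matr{x}_1,\matr{a}_1,\dots,\matr{a}_{t-1}}$, with respect to which $\matr{x}_t$ is measurable and, by the modelling assumptions, $\E[\matr{a}_t\mid\scriptf'_t]=\matr{x}_t$ with the entries of $\matr{a}_t$ conditionally independent given $\scriptf'_t$; a statement about $C_{t,i}(\delta)\in\scriptf$ translates into one about $h^{-1}(C_{t,i}(\delta))\in\scriptf'$ through $\P(C_{t,i}(\delta))=\P'(h^{-1}(C_{t,i}(\delta)))$. Substituting \eqref{eq:RA_update} and using $\matr{\pi}^T\matr{W}=\matr{\pi}^T$ gives
\begin{equation*}
	M_{t+1}-M_t=\alpha\,\matr{\pi}^T(\matr{a}_t-\matr{x}_t),\qquad \E[M_{t+1}\mid\scriptf'_t]=M_t,
\end{equation*}
and since $\matr{\pi}\succ\matr{0}$, $\mathbf{1}^T\matr{\pi}=1$ and $\matr{x}_t\in[0,1]^N$, the process $M_t$ takes values in $[0,1]$. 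The increments of an $L^2$-martingale are orthogonal, so $\sum_{t=1}^{n-1}\E[(M_{t+1}-M_t)^2]=\E[M_n^2]-\E[M_1^2]\leq 1$ for every $n$, whence $\sum_{t\geq 1}\E[(M_{t+1}-M_t)^2]<\infty$.

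It remains to read off the coordinatewise estimate. Conditional independence of the entries of $\matr{a}_t$ yields
\begin{equation*}
\begin{aligned}
	\E\big[(M_{t+1}-M_t)^2\big] &=\alpha^2\,\E\big[\Var(\matr{\pi}^T\matr{a}_t\mid\scriptf'_t)\big]\\
	&=\alpha^2\sum_{k=1}^{N}\pi_k^2\,\E\big[x_{t,k}(1-x_{t,k})\big]\\
	&\geq\alpha^2\pi_i^2\,\E\big[x_{t,i}(1-x_{t,i})\big],
\end{aligned}
\end{equation*}
so convergence of the series forces $\E[x_{t,i}(1-x_{t,i})]\to 0$ as $t\to\infty$, because $\alpha>0$ and $\pi_i>0$. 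Since $x_{t,i}\in[0,1]$, the complement of $C_{t,i}(\delta)$ is $\{x_{t,i}\in[\delta,1-\delta]\}$, on which $x_{t,i}(1-x_{t,i})\geq\delta(1-\delta)$, so Markov's inequality gives $\P(C_{t,i}(\delta)^{c})\leq \E[x_{t,i}(1-x_{t,i})]/(\delta(1-\delta))$, which tends to $0$; hence $\P(C_{t,i}(\delta))\to 1$.

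I expect the main obstacle to be bookkeeping rather than mathematics: making the filtration $\{\scriptf'_t\}$, the conditional law of $\matr{a}_t$, and the pushforward under $h$ fully precise on $(\Omega',\scriptf',\P')$, and justifying cleanly the orthogonality identity $\sum_t\E[(M_{t+1}-M_t)^2]<\infty$. Conditional independence of the coordinates of $\matr{a}_t$ is also genuinely needed for the middle equality above---without it the weighted sum could have vanishing conditional variance while the individual variances do not---but this is part of the model. Once $M_t$ is identified as a bounded martingale the rest is routine.
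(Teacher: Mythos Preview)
Your proof is correct and shares the paper's core idea---use the bounded martingale \(M_t=\matr{\pi}^T\matr{x}_t\), exploit conditional independence of the coordinates of \(\matr{a}_t\) to decompose the second moment of its increments as \(\alpha^2\sum_k\pi_k^2\,\E[x_{t,k}(1-x_{t,k})]\), and finish with a Markov-type bound on the complement of \(C_{t,i}(\delta)\). The one genuine difference is how you obtain \(\E[(M_{t+1}-M_t)^2]\to 0\): you use the \(L^2\)-orthogonality of martingale increments to get summability directly, whereas the paper first invokes the martingale convergence theorem to get \(M_t\xrightarrow{a.s.}M_\infty\), deduces \(\Delta M_t\xrightarrow{a.s.}0\), and then upgrades to mean-square via boundedness. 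Your route is slightly more elementary and self-contained (no need for the almost-sure convergence theorem or the dominated-convergence step), while the paper's route has the side benefit of establishing \(q_t\xrightarrow{a.s.}q_\infty\), which it does not actually use elsewhere. The paper also phrases the final Chebyshev step through the auxiliary variable \(y_{t,i}=a_{t,i}-x_{t,i}\) and the equivalence \(|y_{t,i}|\ge\delta\iff x_{t,i}\in[\delta,1-\delta]\), but since \(\E[y_{t,i}^2]=\E[x_{t,i}(1-x_{t,i})]\) this is just a reparametrisation of your bound.
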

	\hrule
	\vs
	We also consider the intersection of all events \(C_{t,i}(\delta)\) taken over all nodes in the network,
	\begin{equation}
		C_t(\delta) = \bigcap_{i=1}^N C_{t,i}(\delta),
	\end{equation}
	illustrated in Figure \ref{fig:Ct-delta}.
	\begin{figure}[ht]
		\centering
		\begin{tikzpicture}[x=.5cm,y=0.5cm,>=latex]
		\def\RWd{8}
		\def\RHt{8}
		\def\CutSide{30pt}
		\draw
		(0,0) rectangle (\RWd,\RHt);
		\path[draw,pattern={north east lines} ] % fill=cyan
		(0,0)  rectangle ++(\CutSide,\CutSide) 
		(\RWd,0) rectangle ++(-\CutSide,\CutSide) 
		(0,\RHt) rectangle ++(\CutSide,-\CutSide) 
		(\RWd,\RHt) rectangle ++(-\CutSide,-\CutSide);
		
		\draw
		(-1,0)  node {(0,0)}
		(-1,\RHt)  node {(0,1)}
		(\RWd+1,0)  node {(1,0)}
		(\RWd+1,\RHt)  node {(1,1)};
		\begin{scope}[|<->|,help lines,text=black]
		\draw
		([yshift=13pt]0,\RHt) -- node[fill=white] {\(\delta\)} ++(\CutSide,0);   
		\end{scope}
		\end{tikzpicture}
		\caption{A sample path \(\omega = (\matr{x}_1,\matr{x}_2,\dots)\) belongs to \(C_t(\delta)\) if, at time \(t\), \(\matr{x}_t\) lies in the shaded region. The figure illustrates the two-dimensional case (\(N = 2\)).}
		\label{fig:Ct-delta}
	\end{figure}
	The following lemma shows that the probability of this intersection also converges to \(1\).
	\begin{lem}\label{lem:final}
		For any  \(\delta \in (0,1/2)\),
		\begin{equation*}
		\lim_{t\to\infty} \P\paren{C_t(\delta)} = 1.
		\end{equation*}
		%	For all  \(\delta \in (0,1/2)\) and for all \(\epsilon > 0\), there exists a \(t_{\epsilon,\delta} \in \naturals\) such that \(\P\paren{C_t(\delta)} > 1-\epsilon\) whenever \(t > t_{\epsilon,\delta}\).
	\end{lem}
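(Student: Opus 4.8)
The plan is to exploit the fact that $C_t(\delta)$ is a finite intersection — over the $N$ agents — of events whose probabilities each tend to $1$ by Lemma \ref{lem:conv_bti}, and to control the intersection via the complementary events and a union bound. First I would pass to complements: since each $C_{t,i}(\delta)\in\scriptf$ and $\scriptf$ is a $\sigma$-algebra, we have
\begin{equation*}
C_t(\delta)^c = \Big(\bigcap_{i=1}^N C_{t,i}(\delta)\Big)^c = \bigcup_{i=1}^N C_{t,i}(\delta)^c \in \scriptf,
\end{equation*}
so that $\P(C_t(\delta)) = 1 - \P\big(\bigcup_{i=1}^N C_{t,i}(\delta)^c\big)$.

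Next I would apply Boole's inequality (the union bound) to the finite union:
\begin{equation*}
\P\Big(\bigcup_{i=1}^N C_{t,i}(\delta)^c\Big) \le \sum_{i=1}^N \P\big(C_{t,i}(\delta)^c\big) = \sum_{i=1}^N \big(1 - \P(C_{t,i}(\delta))\big).
\end{equation*}
By Lemma \ref{lem:conv_bti}, for each fixed $i \in \{1,\dots,N\}$ we have $\P(C_{t,i}(\delta)) \to 1$, hence $1 - \P(C_{t,i}(\delta)) \to 0$ as $t\to\infty$. Since the sum has only $N < \infty$ terms, each tending to $0$, the whole sum tends to $0$.

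Combining the two displays gives $\P(C_t(\delta)) \ge 1 - \sum_{i=1}^N (1 - \P(C_{t,i}(\delta)))$, so $\liminf_{t\to\infty}\P(C_t(\delta)) \ge 1$; together with the trivial bound $\P(C_t(\delta)) \le 1$ this yields $\lim_{t\to\infty}\P(C_t(\delta)) = 1$. I do not anticipate any real obstacle here: the only thing to be slightly careful about is that the finiteness of the number of agents $N$ is what makes the union bound collapse to $0$ (the argument would fail for a countably infinite network without uniform control), and that measurability of $C_t(\delta)$ is already guaranteed by $C_{t,i}(\delta)\in\scriptf$ and closure of $\scriptf$ under finite intersections.
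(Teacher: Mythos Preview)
Your proof is correct and is essentially the same as the paper's: the paper also passes to complements via De Morgan and applies subadditivity (i.e., the union bound), arriving at the equivalent inequality $\P(C_t(\delta)) \ge \sum_{i=1}^N \P(C_{t,i}(\delta)) - (N-1)$, which is just your bound $1 - \sum_{i=1}^N(1-\P(C_{t,i}(\delta)))$ rewritten.
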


	\hrule
	\vs
	We define some new subsets of \om,
	\begin{equation}
	C^{(r)}_{t,i}(\delta) = 
	\begin{cases}
	\{(\matr{x}_1,\matr{x}_2,\dots)  \in \Omega : 0 \leq x_{t,i} < \delta\}, & r = 0\\
	\{(\matr{x}_1,\matr{x}_2,\dots)  \in \Omega : 1-\delta < x_{t,i} \leq 1\}, & r = 1,
	\end{cases}
	\end{equation}
	and we make the observation that these two sets partition \(C_{t,i}(\delta)\) for any \(\delta \in (0,1/2)\), i.e.,
	\begin{equation}
	C_{t,i}(\delta) = C^{(0)}_{t,i}(\delta) \cup C^{(1)}_{t,i}(\delta),
	\end{equation}
	and they are disjoint:
	\begin{equation}
	C^{(0)}_{t,i}(\delta) \cap C^{(1)}_{t,i}(\delta) = \emptyset.
	\end{equation}
	The set \(C_{t}(\delta)\) can therefore be rewritten as
	\begin{equation}\label{eq:Ct-delta}
	\begin{aligned}
	C_{t}(\delta) &= \bigcap_{i=1}^N C_{t,i}(\delta)\\
	&= \bigcap_{i=1}^N \paren{C^{(0)}_{t,i}(\delta) \cup C^{(1)}_{t,i}(\delta)}\\
	&= \bigcup_{\matr{m} \in \{0,1\}^N}\paren{\bigcap_{i=1}^N C^{(m_i)}_{t,i}(\delta)},
	\end{aligned}
	\end{equation}
	where  \(\matr{m} = (m_1,\dots,m_N) \in \{0,1\}^N\).
	
	With these new definitions at hand, we obtain the following crucial result, which states that if there is a directed edge between two agents, then the probability that they simultaneously take  different actions will decrease to zero in the limit as \(t \to \infty\).
	\begin{lem}\label{lem:twocorners}
		Suppose \(k,l\) are two agents such that \(w_{kl} > 0\). Then, for all \(\delta > 0\) satisfying \(\delta \leq \min\{\alpha w_{kl},\frac{1-\alpha}{2-\alpha}\}\), and for all \(\epsilon > 0\), there exists a time \(t_\epsilon\) such that for all \(t \geq t_\epsilon\),
		\begin{equation*}
		\P \paren{C_{t,k}^{(0)}(\delta) \cap C_{t,l}^{(1)}(\delta)} \leq \delta + \epsilon,
		\end{equation*}
		and
		\begin{equation*}
			\P \paren{C_{t,k}^{(1)}(\delta) \cap C_{t,l}^{(0)}(\delta)} \leq \delta + \epsilon.
		\end{equation*}
	\end{lem}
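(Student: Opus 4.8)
The plan is to exploit the only structural assumption available, namely $w_{kl}>0$: in the update rule \eqref{eq:RA_update} the coordinate $x_{t+1,k}$ contains the summand $\alpha w_{kl}a_{t,l}$, so agent $l$'s action at time $t$ has a direct, lower‑bounded influence on $x_{t+1,k}$. I will prove only the first inequality; the second follows from the same argument with the roles of the two corners ($0$ and $1$) interchanged, and the closing step is identical. Throughout, write $A_t=C_{t,k}^{(0)}(\delta)\cap C_{t,l}^{(1)}(\delta)$, so that on $A_t$ we have $x_{t,k}<\delta$ and $x_{t,l}>1-\delta$.

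First I would split $A_t$ according to where $x_{t+1,k}$ lands, using the disjoint decomposition $\Omega=C_{t+1,k}^{(0)}(\delta)\sqcup C_{t+1,k}^{(1)}(\delta)\sqcup\big(C_{t+1,k}(\delta)\big)^{c}$ (valid since, by construction, $C_{t+1,k}^{(0)}$ and $C_{t+1,k}^{(1)}$ are disjoint with union $C_{t+1,k}$):
\begin{equation*}
\P(A_t)=\P\big(A_t\cap C_{t+1,k}^{(0)}(\delta)\big)+\P\big(A_t\cap C_{t+1,k}^{(1)}(\delta)\big)+\P\big(A_t\cap(C_{t+1,k}(\delta))^{c}\big).
\end{equation*}
The goal is then to show the three terms are at most $\delta$, equal to $0$, and at most $1-\P(C_{t+1,k}(\delta))$, respectively. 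Granting this, Lemma \ref{lem:conv_bti} gives $1-\P(C_{t+1,k}(\delta))\to 0$, so for any $\epsilon>0$ there is $t_\epsilon$ with $\P(A_t)\le\delta+\epsilon$ for all $t\ge t_\epsilon$, as claimed.

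The middle term vanishes for a purely deterministic reason: on $A_t$, the bound $x_{t,k}<\delta$ and $\sum_j w_{kj}a_{t,j}\le\sum_j w_{kj}=1$ give, via \eqref{eq:RA_update}, $x_{t+1,k}\le(1-\alpha)\delta+\alpha$, and the hypothesis $\delta\le\tfrac{1-\alpha}{2-\alpha}$ is exactly the inequality $(1-\alpha)\delta+\alpha\le 1-\delta$, so $x_{t+1,k}\le 1-\delta$ on $A_t$ and $A_t\cap C_{t+1,k}^{(1)}(\delta)=\emptyset$. For the first term, on $A_t\cap\{a_{t,l}=1\}$ I discard every nonnegative summand of \eqref{eq:RA_update} except $\alpha w_{kl}a_{t,l}$ to obtain $x_{t+1,k}\ge\alpha w_{kl}\ge\delta$, so $A_t\cap\{a_{t,l}=1\}$ is disjoint from $C_{t+1,k}^{(0)}(\delta)$ and hence $A_t\cap C_{t+1,k}^{(0)}(\delta)\subseteq A_t\cap\{a_{t,l}=0\}$. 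Conditioning on the $\sigma$-algebra $\mathcal{G}_t$ of information available just before $\matr a_t$ is drawn (with respect to which $A_t$ is measurable, as it depends only on $x_{t,k}$ and $x_{t,l}$) and using $a_{t,l}\mid\mathcal{G}_t\sim\mathrm{Bernoulli}(x_{t,l})$,
\begin{equation*}
\P\big(A_t\cap C_{t+1,k}^{(0)}(\delta)\big)\le\P\big(A_t\cap\{a_{t,l}=0\}\big)=\E\big[(1-x_{t,l})\,\mathbf 1_{A_t}\big]\le\delta\,\P(A_t)\le\delta,
\end{equation*}
because $1-x_{t,l}<\delta$ on $A_t$. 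The third term is at most $\P\big((C_{t+1,k}(\delta))^{c}\big)=1-\P(C_{t+1,k}(\delta))$, completing the list.

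The step I expect to be the main obstacle is not the arithmetic but making the conditioning rigorous inside $(\Omega,\scriptf,\P)$, which the paper defines only as the pushforward of $(\Omega',\scriptf',\P')$ under $h$: one must identify the appropriate filtration, verify that the corner events and $\{a_{t,l}=1\}$ are measurable at the correct stage, and extract $\P(a_{t,l}=1\mid\mathcal{G}_t)=x_{t,l}$ from the joint‑cdf construction. Once that is in place, the remainder is bookkeeping with \eqref{eq:RA_update} and the two numerical constraints on $\delta$: $\delta\le\alpha w_{kl}$ keeps $x_{t+1,k}$ out of the $0$-corner whenever $a_{t,l}=1$, while $\delta\le\tfrac{1-\alpha}{2-\alpha}$ forbids $x_{t+1,k}$ from reaching the $1$-corner in one step.
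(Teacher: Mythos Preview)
Your proof is correct and essentially matches the paper's argument: the paper likewise uses the two constraints on $\delta$ to force $x_{t+1,k}\in[\delta,1-\delta]$ whenever $x_{t,k}<\delta$ and $a_{t,l}=1$, bounds the $\{a_{t,l}=0\}$ contribution by $\delta$ via the Bernoulli structure, and then invokes Lemma~\ref{lem:conv_bti} (through Lemma~\ref{lem:final}) to make the ``middle region'' probability smaller than $\epsilon$. The only cosmetic difference is that the paper partitions directly on $\{a_{t,l}=0\}$ versus $\{a_{t,l}=1\}$ rather than on the location of $x_{t+1,k}$, so your three-way split collapses to their two-way one.
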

	\vs
	\hrule
	\vs
	Finally, we define the event
	\begin{equation}
	C_t^{\matr{m}}(\delta) = \bigcap_{i=1}^N C_{t,i}^{(m_i)}(\delta), \quad \matr{m} \in \{0,1\}^N.
	\end{equation}
	Each binary vector \(\matr{m} \in \{0,1\}^N\) represents a corner in the \(N\)-dimensional unit cube, so \(C_t^{\matr{m}}(\delta)\) is the event that \(\matr{x}_t\) is \(\delta\)-close to \(\matr{m}\) at time \(t\). This is illustrated in Figure \ref{fig:Ct^0-delta} for \(\matr{m} = \mathbf{0}\).
	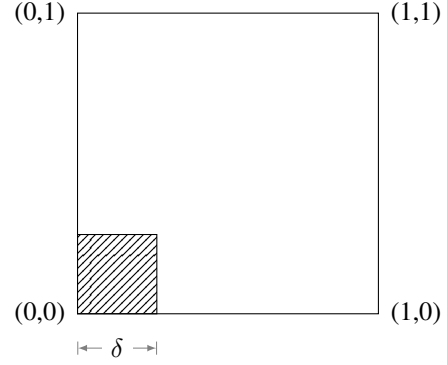
\begin{figure}[ht]
		\centering
		\begin{tikzpicture}[x=.5cm,y=0.5cm,>=latex]
		\def\RWd{8}
		\def\RHt{8}
		\def\CutSide{30pt}
		\draw
		(0,0) rectangle (\RWd,\RHt);
		\path[draw,pattern={north east lines} ] % fill=cyan
		(0,0)  rectangle ++(\CutSide,\CutSide) ;
		%			(\RWd,0) rectangle ++(-\CutSide,\CutSide) 
		%			(0,\RHt) rectangle ++(\CutSide,-\CutSide) 
		%			(\RWd,\RHt) rectangle ++(-\CutSide,-\CutSide);
		\draw
		(-1,0)  node {(0,0)}
		(-1,\RHt)  node {(0,1)}
		(\RWd+1,0)  node {(1,0)}
		(\RWd+1,\RHt)  node {(1,1)};
		\begin{scope}[|<->|,help lines,text=black]
		\draw
		([yshift=-13pt]0,0) -- node[fill=white] {\(\delta\)} ++(\CutSide,0);   
		\end{scope}
		\end{tikzpicture}
		\caption{A sample path \(\omega = (\matr{x}_1,\matr{x}_2,\dots)\) belongs to \(C_t^\mathbf{0}(\delta)\) if, at time \(t\), \(\matr{x}_t\) lies in the shaded corner. The figure illustrates the two-dimensional case (\(N = 2\)).}
		\label{fig:Ct^0-delta}
	\end{figure}
	In light of \eqref{eq:Ct-delta}, we obtain
	\begin{equation}
	C_t(\delta) = \bigcup_{m\in \{0,1\}^N} C_t^{\matr{m}}(\delta).
	\end{equation}
	The next lemma shows that the probability mass eventually is concentrated in the special corners \(\mathbf{0}\) and \(\mathbf{1}\). In other words, the RA model converges to consensus in probability.
	\begin{lem}\label{lem:union_corners}
		For all \(\epsilon >0\) and all \(\delta > 0\) there exists a time \(t_{\epsilon,\delta}\) such that for all \(t \geq t_{\epsilon,\delta}\),
		\begin{equation*}
		\P\paren{C_t^{\matr{0}}(\delta)} + \P\paren{C_t^{\matr{1}}(\delta)}  > 1 - \epsilon.
		\end{equation*}
	\end{lem}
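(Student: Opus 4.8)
The plan is to use the disjoint decomposition $C_t(\delta)=\bigcup_{\matr{m}\in\{0,1\}^N}C_t^{\matr{m}}(\delta)$ recorded just before the lemma: distinct corners $\matr{m}\neq\matr{m}'$ differ in some coordinate $i$, and for $\delta<1/2$ the sets $C_{t,i}^{(0)}(\delta)$ and $C_{t,i}^{(1)}(\delta)$ are disjoint, so the $C_t^{\matr{m}}(\delta)$ are pairwise disjoint and
\[
\P\paren{C_t(\delta)}=\P\paren{C_t^{\matr{0}}(\delta)}+\P\paren{C_t^{\matr{1}}(\delta)}+\!\!\sum_{\matr{m}\notin\{\matr{0},\matr{1}\}}\!\!\P\paren{C_t^{\matr{m}}(\delta)} .
\]
By Lemma~\ref{lem:final} the left-hand side tends to $1$, so the lemma will follow once we show that the sum over the $2^N-2$ ``mixed'' corners $\matr{m}\notin\{\matr{0},\matr{1}\}$ can be made as small as we like for $t$ large. (If $N=1$ there are no such corners and the claim is exactly Lemma~\ref{lem:final}, so take $N\geq2$; we may also assume $\delta<1/2$, since decreasing $\delta$ only shrinks $C_t^{\matr{0}}$ and $C_t^{\matr{1}}$.)

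Each mixed corner is ``blocked'' by an edge. For such an $\matr{m}$, the index sets $S_0=\{i:m_i=0\}$ and $S_1=\{i:m_i=1\}$ are both nonempty; since $\matr{W}$ is irreducible the network is strongly connected, so any directed path from a vertex of $S_1$ to a vertex of $S_0$ contains an edge crossing from $S_1$ into $S_0$, i.e.\ there exist $k\in S_0$, $l\in S_1$ with $w_{kl}>0$. As $m_k=0$ and $m_l=1$, for every $\delta'>0$ we get $C_t^{\matr{m}}(\delta')\subseteq C_{t,k}^{(0)}(\delta')\cap C_{t,l}^{(1)}(\delta')$, which is exactly the kind of event bounded by Lemma~\ref{lem:twocorners}.

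The wrinkle is that Lemma~\ref{lem:twocorners} only bounds $\P(C_{t,k}^{(0)}(\delta')\cap C_{t,l}^{(1)}(\delta'))$ by $\delta'+\epsilon'$ (not by $0$), and only for $\delta'$ small relative to the edge weights. I would therefore first fix an auxiliary $\delta_0\in(0,1/2)$ with $\delta_0\leq\delta$, with $\delta_0\leq\min\{\alpha w_{kl},(1-\alpha)/(2-\alpha)\}$ over all edges, and with $(2^N-2)\delta_0\leq\epsilon/4$; then set $\epsilon_1=\epsilon/\paren{4(2^N-2)}$. For each of the finitely many mixed $\matr{m}$, Lemma~\ref{lem:twocorners} applied with parameter $\delta_0$, error $\epsilon_1$, and the edge $(k,l)$ found above gives a threshold time past which $\P(C_t^{\matr{m}}(\delta_0))\leq\delta_0+\epsilon_1$; take the maximum of these thresholds, and also a time past which $\P(C_t(\delta_0))>1-\epsilon/2$ (Lemma~\ref{lem:final}). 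Then for all $t$ beyond both,
\[
\P\paren{C_t^{\matr{0}}(\delta_0)}+\P\paren{C_t^{\matr{1}}(\delta_0)}=\P\paren{C_t(\delta_0)}-\!\!\sum_{\matr{m}\notin\{\matr{0},\matr{1}\}}\!\!\P\paren{C_t^{\matr{m}}(\delta_0)}>\paren{1-\tfrac{\epsilon}{2}}-(2^N-2)(\delta_0+\epsilon_1)\geq1-\epsilon ,
\]
and since $\delta_0\leq\delta$ forces $C_t^{\matr{0}}(\delta_0)\subseteq C_t^{\matr{0}}(\delta)$ and $C_t^{\matr{1}}(\delta_0)\subseteq C_t^{\matr{1}}(\delta)$, monotonicity of $\P$ upgrades this to the statement for $\delta$, with $t_{\epsilon,\delta}$ the threshold found above. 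The main obstacle is precisely this bookkeeping: because Lemma~\ref{lem:twocorners} yields only a $\delta'+\epsilon'$ bound, one must commit to the smaller $\delta_0$ at the outset, hide the $2^N-2$ combinatorial factor inside the choices of $\delta_0$ and $\epsilon_1$, and only then pass back to the given $\delta$ by monotonicity; the ``crossing edge'' step, by contrast, is routine once strong connectivity is invoked.
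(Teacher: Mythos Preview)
Your proposal is correct and follows essentially the same route as the paper: decompose $C_t(\delta)$ into its $2^N$ corners, use irreducibility to find for each mixed corner $\matr{m}$ a crossing edge $w_{kl}>0$ so that $C_t^{\matr{m}}(\delta_0)\subseteq C_{t,k}^{(0)}(\delta_0)\cap C_{t,l}^{(1)}(\delta_0)$, bound the mixed corners via Lemma~\ref{lem:twocorners}, bound the total via Lemma~\ref{lem:final}, and pass back from $\delta_0$ to $\delta$ by monotonicity. The only differences are cosmetic---your choice of constants ($\epsilon/4$ versus the paper's $\epsilon/2$ split) and your explicit handling of the maximum over the finitely many threshold times.
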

	\vs
	\hrule
	\vs
	For the proof of our main result, we also need the following technical lemma (which does not rely on any of the previous discussion).
	\begin{lem}\label{lem:prod-cont-decr}
		Let \(\alpha \in (0,1)\), and for any \(S,N\in\naturals\), consider the product
		\begin{equation*}
			\prod_{s=0}^{S} (1-(1-\alpha)^s \gamma)^N.
		\end{equation*}
		In the limit as \(S \to \infty\), the product converges uniformly to a continuous and decreasing function \(g_{\alpha,N}(\gamma)\) on the closed interval \([0,1]\), with \(g_{\alpha,N}(0) = 1\) and \(g_{\alpha,N}(1) = 0\).
	\end{lem}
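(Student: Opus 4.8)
The plan is to write $\beta := 1-\alpha \in (0,1)$ and denote the partial product by $P_S(\gamma) = \prod_{s=0}^{S}(1-\beta^s\gamma)^N$. Since $\beta^s\gamma \in [0,1]$ for every $\gamma \in [0,1]$ and every $s \ge 0$, each factor lies in $[0,1]$; hence $P_S(\gamma)$ is nonincreasing in $S$ and bounded below by $0$, so the pointwise limit $g_{\alpha,N}(\gamma) := \lim_{S\to\infty} P_S(\gamma)$ exists on $[0,1]$. The endpoint values are immediate: at $\gamma = 0$ every factor equals $1$, so $P_S(0) = 1$ for all $S$ and $g_{\alpha,N}(0) = 1$; at $\gamma = 1$ the $s=0$ factor is $(1-1)^N = 0$, so $P_S(1) = 0$ for all $S$ and $g_{\alpha,N}(1) = 0$.

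The only delicate point is uniform convergence near $\gamma = 1$, where the $s=0$ factor $(1-\gamma)^N$ degenerates and a naive logarithmic bound on $P_S$ blows up. I would circumvent this by isolating that factor: write $P_S(\gamma) = (1-\gamma)^N Q_S(\gamma)$ with $Q_S(\gamma) = \prod_{s=1}^{S}(1-\beta^s\gamma)^N$. For $s \ge 1$ and $\gamma \in [0,1]$ one has $\beta^s\gamma \le \beta < 1$, so using the elementary bound $0 \le -\log(1-u) \le u/(1-\beta)$ valid for $u \in [0,\beta]$, the log-sums satisfy
\[
|\log Q_{S'}(\gamma) - \log Q_S(\gamma)| \;\le\; \frac{N}{1-\beta}\sum_{s=S+1}^{\infty}\beta^s\gamma \;\le\; \frac{N\,\beta^{S+1}}{(1-\beta)^2}
\]
for all $S' > S$, uniformly in $\gamma$. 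Thus $\{\log Q_S\}$ is uniformly Cauchy, hence converges uniformly on $[0,1]$ to a bounded function $\log Q \le 0$; since $\exp$ is $1$-Lipschitz on $(-\infty,0]$, this gives $Q_S \to Q$ uniformly, and multiplying by $(1-\gamma)^N \le 1$ preserves uniformity. Therefore $P_S \to g_{\alpha,N}$ uniformly on $[0,1]$, with $g_{\alpha,N}(\gamma) = (1-\gamma)^N Q(\gamma)$.

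Continuity of $g_{\alpha,N}$ is then automatic: each $P_S$ is a polynomial in $\gamma$, and a uniform limit of continuous functions is continuous. For monotonicity, observe that on $[0,1]$ each factor $\gamma \mapsto (1-\beta^s\gamma)^N$ is nonnegative and decreasing, so each $P_S$ is decreasing as a product of nonnegative decreasing functions; passing to the limit shows $g_{\alpha,N}$ is nonincreasing. If a strictly decreasing statement is wanted, it follows from $g_{\alpha,N}(\gamma) = (1-\gamma)^N Q(\gamma)$ with $Q(\gamma) > 0$ on $[0,1]$ (the infinite product converges to a positive value since $\sum_s \beta^s\gamma < \infty$) and $(1-\gamma)^N$ strictly decreasing on $[0,1)$, together with $g_{\alpha,N}(1) = 0$; alternatively one differentiates $\log g_{\alpha,N}$ on $[0,1)$. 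I expect the main obstacle to be precisely this bookkeeping near $\gamma = 1$: the factorization $P_S = (1-\gamma)^N Q_S$ is what lets one separate the harmless convergent tail, controlled uniformly in $\gamma$, from the single degenerate factor, which is itself continuous and decreasing on all of $[0,1]$.
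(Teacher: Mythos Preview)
Your proof is correct and follows essentially the same route as the paper: both isolate the $s=0$ factor $(1-\gamma)^N$, pass to logarithms on the tail $\prod_{s\ge 1}(1-(1-\alpha)^s\gamma)^N$, and use a geometric majorant (you via $-\log(1-u)\le u/(1-\beta)$ on $[0,\beta]$, the paper via $\ln x\ge 1-1/x$) to get uniform convergence of the log-series, from which continuity and monotonicity follow. The only cosmetic differences are that the paper names the Weierstrass $M$-test explicitly where you argue uniform Cauchy, and that you add the Lipschitz remark for $\exp$ on $(-\infty,0]$ and an optional strict-decrease argument.
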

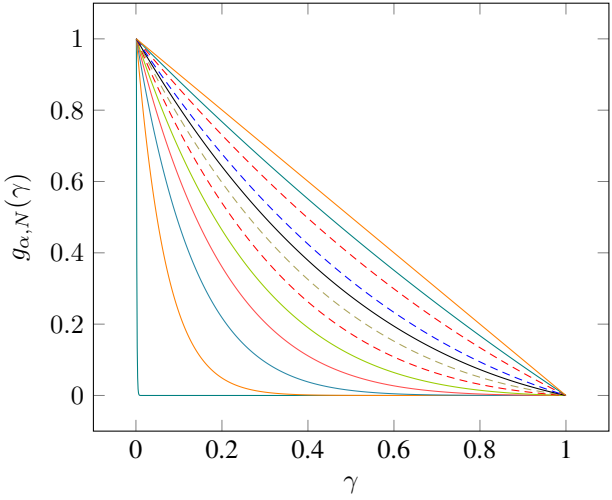
\begin{figure}[ht]
	\begin{tikzpicture}
		\begin{axis}[
			cycle list name=exotic,
			xlabel={\(\gamma\)},
			ylabel={\(g_{\alpha,N}(\gamma)\)},
			xtick =  {0, 200,..., 1000},
			xticklabels={0,0.2,0.4,0.6,0.8,1}
			]
			\pgfplotstableread{qpoch.dat}\datatable
			%\pgfplotstabletranspose\datatable{\datatable};
			\addplot +[thin, mark = none] table[x index=0, y index = 1] {\datatable};
			%\addplot +[thin, mark = none] table[x index=0, y index=2] {\datatable};
			\addplot +[thin, mark = none] table[x index=0, y index=3] {\datatable};
			%\addplot +[thin, mark = none] table[x index=0, y index=4] {\datatable};
			\addplot +[thin, mark = none] table[x index=0, y index=5] {\datatable};
			%\addplot +[thin, mark = none] table[x index=0, y index=6] {\datatable};
			\addplot +[thin, mark = none] table[x index=0, y index=7] {\datatable};
			%\addplot +[thin, mark = none] table[x index=0, y index=8] {\datatable};
			\addplot +[thin, mark = none] table[x index=0, y index=9] {\datatable};
			%\addplot +[thin, mark = none] table[x index=0, y index=10] {\datatable};
			\addplot +[thin, mark = none] table[x index=0, y index=11] {\datatable};
			%\addplot +[thin, mark = none] table[x index=0, y index=12] {\datatable};
			\addplot +[thin, mark = none] table[x index=0, y index=13] {\datatable};
%			\addplot +[thin, mark = none] table[x index=0, y index=14] {\datatable};
			\addplot +[thin, mark = none] table[x index=0, y index=15] {\datatable};
%			\addplot +[thin, mark = none] table[x index=0, y index=16] {\datatable};
			\addplot +[thin, mark = none] table[x index=0, y index=17] {\datatable};
%			\addplot +[thin, mark = none] table[x index=0, y index=18] {\datatable};
%			\addplot +[thin, mark = none] table[x index=0, y index=19] {\datatable};
%			\addplot +[thin, mark = none] table[x index=0, y index=20] {\datatable};
			\addplot +[thin, mark = none] table[x index=0, y index=21] {\datatable};
%			\addplot +[thin, mark = none] table[x index=0, y index=22] {\datatable};
%			\addplot +[thin, mark = none] table[x index=0, y index=23] {\datatable};
%			\addplot +[thin, mark = none] table[x index=0, y index=24] {\datatable};
			\addplot +[thin, mark = none] table[x index=0, y index=25] {\datatable};
%			\addplot +[thin, mark = none] table[x index=0, y index=26] {\datatable};
%			\addplot +[thin, mark = none] table[x index=0, y index=27] {\datatable};
%			\addplot +[thin, mark = none] table[x index=0, y index=28] {\datatable};
%			\addplot +[thin, mark = none] table[x index=0, y index=29] {\datatable};
			\addplot +[thin, mark = none] table[x index=0, y index=30] {\datatable};
		\end{axis}
	\end{tikzpicture}
	\caption{The function \(g_{\alpha,N}(\gamma)\) from Lemma \ref{lem:prod-cont-decr}, for \(N = 6\) and where \(\alpha\)  is one of \(12\) evenly spaced values from \(0.001\) to \(0.999\). The smaller \(\alpha\) is, the faster the function decreases towards \(0\).}\label{fig:qpochhammer}
\end{figure}
	The function \(g_{\alpha,N}(\gamma)\) is illustrated in Figure \ref{fig:qpochhammer} on the interval \([0,1]\), for different values of \(\alpha \in (0,1)\).
	%\begin{lem}\label{lem:finite-prod}
	%	Let \(\alpha \in (0,1)\) and \(N \in \naturals\). For every \(\epsilon > 0\) there exists a \(\gamma > 0\) such that
	%	\begin{equation*}
	%		\prod_{s=0}^{S} (1-(1-\alpha)^s \gamma)^N > 1-\epsilon,
	%	\end{equation*}
	%valid for all \(S \in \naturals\).
	%\end{lem}
	%\begin{proof}
	%	
	%	Note that the product is decreasing in \(S\), so
	%	\begin{equation}
	%	\prod_{s=0}^S (1-(1-\alpha)^s \gamma)^N  \geq \prod_{s=0}^\infty (1-(1-\alpha)^s \gamma)^N.
	%	\end{equation}
	%By Lemma \ref{lem:prod-cont-decr}, the infinite product converges to a continuous and strictly decreasing function \(g(\gamma)\) on \([0,1]\), with \(g(0) = 1\). The claim follows.
	%\end{proof}
	
	We are now ready to prove our main result,  which states that the probability of a sample path being \(\delta\)-close (with \(\delta\) arbitrarily small) to either of the corners \(\mathbf{0}\) and \(\mathbf{1}\) beyond some point in the far enough future, can be made \(\epsilon\)-close to \(1\). In other words, in the limit as \(S\) tends to infinity, this will happen almost surely. In particular, it holds that \(\matr{x}_t \xrightarrow{a.s.} \{\matr{0},\matr{1}\}\).
	\begin{thm}\label{thm:main}
		For all  \(\epsilon > 0\) and all \(\delta > 0\), there exists a time instant \(t_{\epsilon,\delta}\in\naturals\) such that for every \(S \in \naturals\) we have
		\begin{equation}\label{eq:final}
			\P\paren{\bigcap_{s = 0}^S C_{t_{\epsilon,\delta}+s}^{\matr{0}}(\delta)} + \P\paren{\bigcap_{s = 0}^S C_{t_{\epsilon,\delta}+s}^{\matr{1}}(\delta)} > 1 - \epsilon.
		\end{equation}
	\end{thm}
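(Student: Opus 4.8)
Here is the plan. The idea is to combine Lemma~\ref{lem:union_corners} (which already puts almost all the mass into the two corners $\matr 0,\matr 1$) with the elementary fact that, once a sample path sits deep inside a corner, the update \eqref{eq:RA_update} traps it there with overwhelming probability, because the ``wrong'' actions become exceedingly rare. Fix $\epsilon>0$ and $\delta>0$ (the events in \eqref{eq:final} are nondecreasing in $\delta$, so we may assume $\delta<1/2$), and introduce an auxiliary radius $\delta'\in(0,\delta]$ to be chosen small at the very end. The deterministic heart of the argument: if $\matr x_t\in C_{t}^{\matr 0}(\delta')$ and every agent plays action $0$ at times $t,t+1,\dots,t+s-1$, then iterating \eqref{eq:RA_update} gives $\matr x_{t+s}=(1-\alpha)^s\matr x_t$, hence $0\le x_{t+s,i}<(1-\alpha)^s\delta'\le\delta'\le\delta$ for every $i$, i.e.\ $\matr x_{t+s}\in C_{t+s}^{\matr 0}(\delta)$. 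Writing $B_j:=\{\matr a_{t+j}=\matr 0\}$, this says $C_{t}^{\matr 0}(\delta')\cap B_0\cap\dots\cap B_{S-1}\subseteq\bigcap_{s=0}^S C_{t+s}^{\matr 0}(\delta)$; the mirror statement for the corner $\matr 1$ follows the same way, using $\matr W\matr 1=\matr 1$ so that $1-x_{t+s,i}=(1-\alpha)^s(1-x_{t,i})$ when action $\matr 1$ is played throughout.

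Next I would lower-bound $\P\bigl(C_{t}^{\matr 0}(\delta')\cap B_0\cap\dots\cap B_{S-1}\bigr)$ by the chain rule on the natural filtration $\mathcal F_t=\sigma(\matr x_1,\matr a_1,\dots,\matr a_{t-1})$, under which $\matr x_t$ is measurable and, by construction of the model, the coordinates of $\matr a_t$ are conditionally independent with $a_{t,i}\sim\mathrm{Bernoulli}(x_{t,i})$. Peeling off conditional expectations from $s=S-1$ down to $s=0$, and using that on $C_{t}^{\matr 0}(\delta')\cap B_0\cap\dots\cap B_{s-1}$ one has $x_{t+s,i}=(1-\alpha)^sx_{t,i}<(1-\alpha)^s\delta'$, so that $\P(B_s\mid\mathcal F_{t+s})=\prod_{i=1}^N(1-x_{t+s,i})\ge\bigl(1-(1-\alpha)^s\delta'\bigr)^N$ there, I obtain
\[
\P\!\left(\bigcap_{s=0}^S C_{t+s}^{\matr 0}(\delta)\right)\ \ge\ \P\bigl(C_{t}^{\matr 0}(\delta')\bigr)\prod_{s=0}^{S-1}\bigl(1-(1-\alpha)^s\delta'\bigr)^N\ \ge\ \P\bigl(C_{t}^{\matr 0}(\delta')\bigr)\,g_{\alpha,N}(\delta'),
\]
the last inequality because the finite products are nonincreasing in $S$ and hence bounded below by their limit $g_{\alpha,N}(\delta')$ (Lemma~\ref{lem:prod-cont-decr}). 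The identical computation with action $\matr 1$ and the conditional probability $\prod_i x_{t+s,i}$ gives $\P\bigl(\bigcap_{s=0}^S C_{t+s}^{\matr 1}(\delta)\bigr)\ge\P\bigl(C_{t}^{\matr 1}(\delta')\bigr)\,g_{\alpha,N}(\delta')$.

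Adding the two bounds, for every $S\in\naturals$ and every $t$,
\[
\P\!\left(\bigcap_{s=0}^S C_{t+s}^{\matr 0}(\delta)\right)+\P\!\left(\bigcap_{s=0}^S C_{t+s}^{\matr 1}(\delta)\right)\ \ge\ g_{\alpha,N}(\delta')\left(\P\bigl(C_{t}^{\matr 0}(\delta')\bigr)+\P\bigl(C_{t}^{\matr 1}(\delta')\bigr)\right).
\]
I would finish by choosing the constants: since $g_{\alpha,N}$ is continuous with $g_{\alpha,N}(0)=1$ (Lemma~\ref{lem:prod-cont-decr}), first pick $\delta'\in(0,\delta]$ with $g_{\alpha,N}(\delta')>1-\epsilon/2$, then set $\epsilon':=\epsilon/2$, and invoke Lemma~\ref{lem:union_corners} to obtain a time $t_{\epsilon',\delta'}$ with $\P(C_{t}^{\matr 0}(\delta'))+\P(C_{t}^{\matr 1}(\delta'))>1-\epsilon'$ for all $t\ge t_{\epsilon',\delta'}$. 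Setting $t_{\epsilon,\delta}:=t_{\epsilon',\delta'}$ and specialising $t=t_{\epsilon,\delta}$ makes the right-hand side at least $(1-\epsilon/2)^2>1-\epsilon$, which is \eqref{eq:final}.

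The step I expect to be the main obstacle is the chain-rule estimate in the second paragraph: one must check carefully that ``no wrong action so far'' is exactly the event that keeps $\matr x_{t+s}=(1-\alpha)^s\matr x_t$ buried ever deeper in the corner, so that the conditional success probabilities are bounded below, uniformly on that event, precisely by the factors $\bigl(1-(1-\alpha)^s\delta'\bigr)^N$ that Lemma~\ref{lem:prod-cont-decr} is tailored to; the conditioning is cleanest run directly on $(\Omega',\scriptf',\P')$, where the actions live, and then transported to $(\Omega,\scriptf,\P)$ via $\P(B)=\P'(h^{-1}(B))$. A secondary but essential point — trivial once spotted — is that one must enter a \emph{strictly smaller} corner, of radius $\delta'<\delta$: this is what pushes $g_{\alpha,N}(\delta')$ up near $1$, and hence lets the persistence probability exceed $1-\epsilon$.
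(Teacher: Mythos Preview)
Your proposal is correct and follows essentially the same approach as the paper: both arguments use the fact that once $\matr{x}_t$ is in a small corner, the probability of all agents repeatedly taking the ``right'' action contracts the state by $(1-\alpha)$ each step, and both bound the resulting product via Lemma~\ref{lem:prod-cont-decr} and then invoke Lemma~\ref{lem:union_corners}. The only cosmetic differences are that you introduce the auxiliary radius $\delta'\le\delta$ up front (avoiding the paper's case split $\delta\lessgtr\delta_0$) and you condition directly on the action events $B_s$ via the filtration rather than on the shrinking corner events $C_{t+s}^{\matr m}((1-\alpha)^s\delta)$ via the Markov property, but these lead to the same product $\prod_{s=0}^{S-1}(1-(1-\alpha)^s\delta')^N$ and the same conclusion.
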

	\begin{proof}
		If \(S = 0\), the theorem is simply Lemma \ref{lem:union_corners}. In the sequel we therefore assume that \(S \geq 1\). The first step of the proof is to show that for all such \(S\),
		\begin{equation}\label{eq:thm1-roadmap}
		\begin{aligned}
		&\P\paren{\bigcap_{s = 0}^S C_{t+s}^{\matr{0}}((1-\alpha)^s\delta)} + \P\paren{\bigcap_{s = 0}^S C_{t+s}^{\matr{1}}((1-\alpha)^s\delta)}\\
		& > \Big(\P(C_{t}^{\matr{0}}(\delta)) + \P(C_{t}^{\matr{1}}(\delta))\Big) A(\delta),
		\end{aligned}
		\end{equation}
		where \(A(\delta)\) is a deterministic function of \(\delta\). Then we use Lemma \ref{lem:prod-cont-decr} to show that \(A(\delta)\) can be made arbitrarily close to \(1\), depending only on the choice of \(\epsilon\). Further, we use Lemma \ref{lem:union_corners} to show that there exists a time \(t_0\) so that \(\P(C_{t}^{\matr{0}}(\delta)) + \P(C_{t}^{\matr{1}}(\delta))\) also can be made arbitrarily close to \(1\), for all \(t \geq t_0\). Finally we combine the results to show that the right hand side of \eqref{eq:thm1-roadmap} can be made arbitrarily close to \(1\), depending only on the choice of \(\epsilon\) and \(\delta\). Since
		\begin{equation}
		\label{eq:inclusion}
		\bigcap_{s = 0}^S C_{t+s}^{\matr{m}}(\delta) \supseteq \bigcap_{s = 0}^S C_{t+s}^{\matr{m}}((1-\alpha)^s\delta), \ \matr{m}\in \{\matr{0},\matr{1}\}, 
		\end{equation}
		the theorem then follows by re-indexing.
		\vs
		
		For the first step, we make the following observations, which hold for general \(t\): Let \(\delta > 0\). Suppose that \(\matr{x}(t)\) lies in the \(\mathbf{0}\) corner of size \(\delta\) and all agents take action \(0\), i.e., \(\omega \in C_t^{\matr{0}}(\delta)\) and \(a(t) = \mathbf{0}\). Then \(\matr{x}(t+1)\) stays in that corner but contracted by a factor \((1-\alpha)\). For each agent, the probability of taking action \(0\) is at least \(1-\delta\), and the actions are all independent. Note that
		\begin{equation}
		\P(h(\{\omega' \in \Omega' : a(t) = \mathbf{0}\}) \mid \{\omega: x(t) = \matr{x}\}) = \prod_{i=1}^N (1-x_i),
		\end{equation}
		and since we assume that \(\omega \in C_t^\mathbf{0}(\delta)\), it holds that \(\mathbf{0} \preceq \matr{x}(t) \prec \delta \mathbf{1}\), so
		\begin{equation}
		\prod_{i=1}^N (1-x_i) > (1-\delta)^N.
		\end{equation}
		Therefore, we obtain
		\begin{equation}\label{eq:thm1-firstbound}
		\P\paren{C_{t+1}^{\mathbf{0}}((1-\alpha)\delta) \mid C_t^{\mathbf{0}}(\delta)} > (1-\delta)^N.
		\end{equation}
		The Markov property of the model gives that
		\begin{equation}
		\begin{aligned}
		&\P\paren{C_{t+2}^{\matr{0}}((1-\alpha)^2\delta) \cap C_{t+1}^{\matr{0}}((1-\alpha)\delta) \mid C_t^{\matr{0}}(\delta)}\\
		= {} &\P\paren{C_{t+2}^{\matr{0}}((1-\alpha)^2\delta) \mid C_{t+1}^{\matr{0}}((1-\alpha)\delta) \cap C_t^{\matr{0}}(\delta)}\\
		\cdot & \P\paren{ C_{t+1}^{\matr{0}}((1-\alpha)\delta) \mid C_t^{\matr{0}}(\delta)}\\
		= {} & \P\paren{C_{t+2}^{\matr{0}}((1-\alpha)^2\delta) \mid C_{t+1}^{\matr{0}}((1-\alpha)\delta)}\\
		\cdot & \P\paren{ C_{t+1}^{\matr{0}}((1-\alpha)\delta) \mid C_t^{\matr{0}}(\delta)}.
		\end{aligned}
		\end{equation}
		Repeated application of this property together with inequality \eqref{eq:thm1-firstbound} results in
		\begin{equation}
		\P\paren{\bigcap_{s = 1}^S C_{t+s}^{\matr{0}}((1-\alpha)^s\delta) \mid C_t^{\matr{0}}(\delta)} > \prod_{s=0}^{S-1}(1-(1-\alpha)^s \delta)^N,
		\end{equation}
		from which it follows that
		\begin{equation}
		\begin{aligned}
		&\P\paren{\bigcap_{s = 0}^S C_{t+s}^{\matr{0}}((1-\alpha)^s\delta)}\\
		 = {} & \P\paren{\bigcap_{s = 1}^S C_{t+s}^{\matr{0}}((1-\alpha)^s\delta) \mid C_t^{\matr{0}}(\delta)}\P\paren{C_t^{\matr{0}}(\delta)}\\
		> {} & \P\paren{C_t^{\matr{0}}(\delta)} \prod_{s=0}^{S-1}(1-(1-\alpha)^s \delta)^N.
		\end{aligned}
		\end{equation}
		A similar derivation for the corner \(\matr{1}\) shows that
		\begin{equation}
		\P \paren{\bigcap_{s = 0}^S C_{t+s}^{\matr{1}}((1-\alpha)^s \delta)}
		> \P(C_t^{\matr{1}}(\delta))\prod_{s=0}^{S-1} (1-(1-\alpha)^s\delta)^N,
		\end{equation}
		and hence we obtain
		\begin{equation}\label{eq:thm1-bound}
		\begin{aligned}
		&\sum_{\matr{m}\in \{\matr{0},\matr{1}\}} \! \! \P\paren{\bigcap_{s = 0}^S C_{t+s}^{\matr{m}}((1-\alpha)^s\delta)}\\
		> {} & \sum_{\matr{m}\in \{\matr{0},\matr{1}\}} \! \! \P(C_t^{\matr{m}}(\delta))  \prod_{s=0}^{S-1} (1-(1-\alpha)^s\delta)^N.
		\end{aligned}
		\end{equation}
		This concludes the first step of the proof. For the remaining step, we deal first with the individual factors of the right hand side of \eqref{eq:thm1-bound}, and then we combine the results.
		\vs
		
		For the second factor (i.e., the finite product),  first note that for all \(\alpha \in (0,1)\), \(\gamma \in [0,1]\) and \(s \in \naturals\) we have \(0 < 1-(1-\alpha)^s \gamma\leq 1\), which leads to
		\begin{equation}
		\prod_{s=0}^{S-1} (1-(1-\alpha)^s \gamma)^N  \geq \prod_{s=0}^\infty (1-(1-\alpha)^s \gamma)^N.
		\end{equation}
		Let \(\epsilon > 0\). W.l.o.g., we can assume that \(\epsilon\leq 1\) (otherwise the result of the theorem is trivial). By Lemma \ref{lem:prod-cont-decr}, the infinite product converges to a continuous and strictly decreasing function \(g_{\alpha,N}(\gamma)\) on \([0,1]\), with \(g_{\alpha,N}(0) = 1\), so we can find a sufficiently small \(\delta_0 > 0\) such that
		\begin{equation}
		\label{eq:bound-second-part}
		\prod_{s=0}^{S-1} (1-(1-\alpha)^s \delta_0)^N > 1-\frac{\epsilon}{2-\epsilon} = \frac{2(1-\epsilon)}{2-\epsilon}  
		\end{equation}
		is valid for all \(S \in \naturals\). 
		
		For the first factor we apply Lemma \ref{lem:union_corners} with the parameters \(\epsilon/2\) and \(\mbox{min}\{\delta,\delta_0\}\). Then there exists a time \(t_{\epsilon,\delta}\) such that, for all \(t \geq t_{\epsilon,\delta}\), we have
		\begin{equation}
		\label{eq:bound-first-part}
		\sum_{\matr{m}\in \{\matr{0},\matr{1}\}} \! \!
		\P\paren{C_{t}^{\matr{m}}(\mbox{min}\{\delta,\delta_0\})} > 1- \frac{\epsilon}{2},
		\end{equation}
		and in particular it holds for \(t = t_{\epsilon,\delta}\).
		\vs
		
		Now we consider two possible cases. The first is \(\delta\leq\delta_0\). By combining the results in \eqref{eq:thm1-bound}, \eqref{eq:bound-second-part} and \eqref{eq:bound-first-part}, we obtain  
		\begin{equation}
		\begin{aligned}		
		&\sum_{\matr{m}\in \{\matr{0},\matr{1}\}}  \! \!
		\P\paren{\bigcap_{s = 0}^S C_{t_{\epsilon,\delta}+s}^{\matr{m}}((1-\alpha)^s\delta)}\\
		> {} & 	\sum_{\matr{m}\in \{\matr{0},\matr{1}\}} \! \! \P(C_t^{\matr{m}}(\delta)) \prod_{s=0}^{S-1} (1-(1-\alpha)^s\delta)^N\\
		\geq {} & \sum_{\matr{m}\in \{\matr{0},\matr{1}\}} \! \! \P(C_t^{\matr{m}}(\delta)) \prod_{s=0}^{S-1} (1-(1-\alpha)^s\delta_0)^N\\
		> {} & \paren{1- \frac{\epsilon}{2}} \frac{2(1-\epsilon)}{2-\epsilon}  = 1-\epsilon.
		\end{aligned}
		\end{equation}
		Here the second inequality is due to the fact that \(\prod_{s=0}^{S-1} (1-(1-\alpha)^s \gamma)^N\) is decreasing as a function of \(\gamma\) (on the interval \([0,1]\)).
		
		The second case is \(\delta > \delta_0\). Now we use the set inclusion
		\begin{equation}
		C_{t_{\epsilon,\delta}+s}^{\matr{m}}((1-\alpha)^s\delta)
		\supseteq 
		C_{t_{\epsilon,\delta}+s}^{\matr{m}}((1-\alpha)^s\delta_0),
		\end{equation}
		and apply a version of \eqref{eq:thm1-bound} where \(\delta\) is replaced by \(\delta_0\). Together with \eqref{eq:bound-second-part} and \eqref{eq:bound-first-part}, this results in   
		\begin{equation}
		\begin{aligned}
		&\sum_{\matr{m}\in \{\matr{0},\matr{1}\}} \! \!
		\P\paren{\bigcap_{s = 0}^S C_{t_{\epsilon,\delta}+s}^{\matr{m}}((1-\alpha)^s\delta)}\\
		\geq {} & \sum_{\matr{m}\in \{\matr{0},\matr{1}\}}  \! \!
		\P\paren{\bigcap_{s = 0}^S C_{t_{\epsilon,\delta}+s}^{\matr{m}}((1-\alpha)^s\delta_0)}\\ 
		> {} &
		\sum_{\matr{m}\in \{\matr{0},\matr{1}\}} \! \! \P(C_t^{\matr{m}}(\delta_0)) \prod_{s=0}^{S-1} (1-(1-\alpha)^s\delta_0)^N\\
		> {} & \paren{1- \frac{\epsilon}{2}} \frac{2(1-\epsilon)}{2-\epsilon}  = 1-\epsilon.
		\end{aligned}
		\end{equation}
		Hence, for any $\delta >0$ we have 
		\begin{equation}
		\label{eq:gencase}
		\sum_{\matr{m}\in \{\matr{0},\matr{1}\}} \! \! \P\paren{\bigcap_{s = 0}^S C_{t_{\epsilon,\delta}+s}^{\matr{m}}((1-\alpha)^s\delta)}\\ > 1-\epsilon.
		\end{equation}
		Finally, equations \eqref{eq:inclusion} and \eqref{eq:gencase} lead us to the desired result
		\begin{equation}
		\begin{aligned}
		&\sum_{\matr{m}\in \{\matr{0},\matr{1}\}} \! \!
		\P\paren{\bigcap_{s = 0}^S C_{t_{\epsilon,\delta}+s}^{\matr{m}}(\delta)}\\
		\geq {} & \sum_{\matr{m}\in \{\matr{0},\matr{1}\}} \! \!
		\P\paren{\bigcap_{s = 0}^S C_{t_{\epsilon,\delta}+s}^{\matr{m}}((1-\alpha)^s\delta)} > 1-\epsilon.
		\end{aligned}
		\end{equation}
	\end{proof}

	\subsection{Consensus results for reducible networks}
	We have just proved that a sufficient condition for the RA model to converge almost surely to consensus is that the network is strongly connected, i.e., \(\matr{W}\) is irreducible. However, this is not a necessary condition, as we will now show. Inspired by Markov chain terminology, we introduce a relation \(\scriptr\) and say that two agents \(i\) and \(j\) \textit{communicate}, denoted by \(i \sim_\scriptr j\), if there exists a directed path from \(i\) to \(j\), and a directed path from \(j\) to \(i\). (Algebraically this means that there exist non-negative integers \(n_1,n_2\) such that \([\matr{W}^{n_1}]_{ij}[\matr{W}^{n_2}]_{ji} \neq 0\), where \(n_1,n_2\) are the lengths of the two paths.)
	
	%	First, there is a directed path of length \(0\) from every node to itself (corresponding to \(\matr{W}^0 = \matr{I}\)), so the relation is reflexive. Second, if \(i \sim j\) then by definition there is a directed path between these nodes in both directions, so \(j \sim i\) and hence the relation is symmetric. If \(i \sim j\) and \(j \sim k\), then there is a directed path from \(i\) to \(j\) and a directed path from \(j\) to \(k\), which combined results in a directed path from \(i\) to \(k\). This holds for the other direction as well, i.e., there is a directed path from \(k\) to \(i\), so \(\scriptr\) is transitive.
	
	It is easy to see that \(\scriptr\) is an equivalence relation. Hence, it induces a set of equivalence classes \(C_1,\dots,C_M\), for some integer \(M \leq N\), which are the strongly connected components in the network (corresponding to \textit{communication classes} in a Markov chain). For two such components, say \(C_r\) and \(C_s\), there is either zero, one or several directed edges from one of them to the other. If there is an edge from, say, \(C_s\) to \(C_r\), then there is no edge in the other direction since otherwise the agents \(C_s\) and \(C_r\) would belong to the same strongly connected component.
	
	We define a relation \(\preceq\) on the collection \(\{C_1,\dots,C_M\}\), where \(C_r \preceq C_s\) means that there is at least one directed path from an agent in \(C_s\) to an agent in \(C_r\) (possibly going through other intermediate components). It is easy to verify that \(\preceq\) is reflexive, antisymmetric and transitive, so \(\{C_1,\dots,C_M\}\) is a poset (partially ordered set) under \(\preceq\). We say that \(C_r\) is a \textit{minimal element} in the poset if there is no \(C_q\) such that \(C_q \preceq C_r\), and we say that \(C_s\) is a \textit{maximal element} if there is no \(C_t\) such that \(C_s \preceq C_t\). 
	
	As an example of this construction, consider the network in Figure \ref{fig:RA-smallnetwork}. Clearly it has four strongly connected components:
	\begin{equation}
		\begin{aligned}
			C_1 &= \{v_1\}, & C_2 &= \{v_2,v_3\}, \\
			C_3 &= \{v_4,v_5\}, & C_4 &= \{v_6,v_7\}.
		\end{aligned} 
	\end{equation}
	Here, \(C_4 \preceq C_2 \preceq C_1\) and \(C_4 \preceq C_3\), so in this poset \(C_1,C_3\) are maximal elements and \(C_4\) is a minimal element.
	
	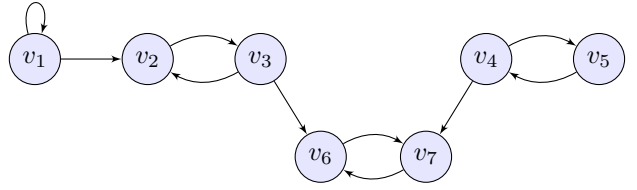
\begin{figure}[t]
		\centering
		\begin{tikzpicture}[auto,main_node/.style={circle,fill=blue!10,draw,minimum size=.5em,inner sep=3pt}, edge/.style={->,> = latex'}]
			
			\node[main_node] (1) at (-4.5,0)  {\(v_1\)};
			\node[main_node] (2) at (-3,0)  {\(v_2\)};
			\node[main_node] (3) at (-1.5, 0)  {\(v_3\)};
			\node[main_node] (4) at (1.5,0)  {\(v_4\)};
			\node[main_node] (5) at (3,0){\(v_5\)};
			\node[main_node] (6) at (-0.7,-1.3){\(v_6\)};
			\node[main_node] (7) at (0.7,-1.3){\(v_7\)};
			
			\draw[edge] (1) to [loop above] (1) {}; % NOTE: self-loop
			\draw[edge] (1) to (2) {};
			\draw[edge] (2) to [bend left] (3) {};
			\draw[edge] (3) to [bend left] (2) {};
			\draw[edge] (3) to (6) {};
			\draw[edge] (4) to [bend left] (5) {};
			\draw[edge] (4) to (7) {};
			\draw[edge] (5) to [bend left] (4) {};
			\draw[edge] (6) to [bend left] (7) {};
			\draw[edge] (7) to [bend left] (6) {};
		\end{tikzpicture}%
		\caption[Short description]{A network with four strongly connected components.}\label{fig:RA-smallnetwork}
	\end{figure}
	\begin{figure}[t]
		\centering
		\begin{tikzpicture}[auto,main_node/.style={circle,fill=blue!10,draw,minimum size=.5em,inner sep=3pt}, edge/.style={->,> = latex'}]
			
			\node[main_node] (1) at (-2,0)  {\(C_1\)};
			\node[main_node] (2) at (-1,-1)  {\(C_2\)};
			\node[main_node] (3) at (1, 0)  {\(C_3\)};
			\node[main_node] (4) at (0,-2)  {\(C_4\)};
			
			\draw[edge] (1) to (2) {};
			\draw[edge] (2) to (4) {};
			\draw[edge] (3) to (4) {};
		\end{tikzpicture}%
		\caption[Short description]{Hasse diagram of the poset of strongly connected components obtained from the network in Figure \ref{fig:RA-smallnetwork}, with maximal elements \(C_1,C_3\) and minimal element \(C_4\).}\label{fig:RA-hasse}
	\end{figure}
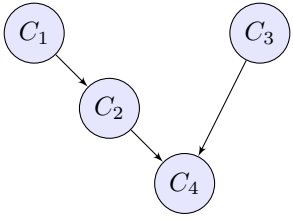
		
	Each poset can be drawn as a \textit{Hasse diagram}, which is a network whose nodes are the elements in the poset, and the edges obey the following rule: Let \(C_r \preceqdot C_s\) denote that \(C_s\) \textit{covers} \(C_r\), which means that \(C_r \preceq C_s\) such that \(r \neq s\) and there is no \(t\) (distinct from \(r\) and \(s\)) such that \(C_r \preceq C_t \preceq C_s\). In the poset's corresponding Hasse diagram, there is a directed edge from \(C_s\) to \(C_r\) if and only if \(C_r \preceqdot C_s\). Figure \ref{fig:RA-hasse} illustrates the Hasse diagram of the poset of strongly connected components obtained from the network in Figure \ref{fig:RA-smallnetwork}. In this Hasse diagram we have
	\begin{equation}
		\begin{aligned}
			C_4 &\preceqdot C_2, & C_2 &\preceqdot C_1, &C_4 &\preceqdot C_3,
		\end{aligned}
	\end{equation}
	but, for example, \(C_4 \not\preceqdot C_1\) since \(C_4 \preceq C_2 \preceq C_1\).
	
	Revisiting the RA model \eqref{eq:RA_update} on scalar form for nodes \(v_4\) and \(v_5\) in our example network, we have
	\begin{subequations}
		\begin{equation}
			x_{t+1,4} = (1-\alpha)x_{t,4} + \alpha w_{45}a_{t,5},
		\end{equation}
		\begin{equation}
			x_{t+1,5} = (1-\alpha)x_{t,5} + \alpha w_{54}a_{t,4},
		\end{equation}
	\end{subequations}
	from which we see that the update rule is independent from agents outside of the component, so clearly we can treat any question of convergence of the component \(C_3\) as if it was a disjoint component. Since this holds true for all maximal elements in a corresponding Hasse diagram for any network, we can apply Theorem \ref{thm:main} to each of these components independently.
	
	Let \(C_s\) be a maximal element, and let \(C_r \preceqdot C_s\). Then there is a least one agent \(k \in C_r\) and one agent \(l \in C_s\) such that \(w_{kl} \neq 0\); that is, there is a link from \(l\) to \(k\). By Lemma \ref{lem:union_corners}, all agents converge (collectively) towards either \(0\) or \(1\), and by the earlier discussion, the component \(C_s\) converges almost surely to the same opinion, say \(1\) (by which we mean that all agents in \(C_s\) converge almost surely to \(1\)).  By Lemma \ref{lem:twocorners} it follows that if agent \(k\) converges at all, it must also be to \(1\). This happens if \(C_s\) is the only maximal element. By  the irreducibility of \(C_r\), we can apply Lemma \ref{lem:twocorners} repeatedly to infer that all agents in \(C_r\) must converge to \(1\).
	
	If there is more than one maximal element, the network will converge to consensus in the sense of \eqref{eq:RA-convergence-definition} if and only if all maximal elements converge to the same value. To see this, consider the Hasse diagram in Figure \ref{fig:RA-hasse-general}, in which \(C_1\) and \(C_2\) are the only maximal elements. If \(C_1\) and \(C_2\) converge to different values, then, by the argument in the previous paragraph, \(C_4\) would have to converge towards both \(\matr{0}\) and \(\matr{1}\), but this is clearly impossible. On the other hand, if \(C_1\) and \(C_2\) converge to the same value, say \(\matr{1}\), then so will \(C_3, C_4\) and \(C_5\). Furthermore, since the minimal elements \(C_6\) and \(C_7\) are covered by \(C_3,C_4\) and \(C_5\), we can repeat the argument from the previous paragraph to show that also the minimal elements converge to \(\matr{1}\). This argument naturally generalizes to any poset \(\{C_1,\dots,C_M\}\). Hence, we have proved the following result.
	
	\begin{thm}\label{thm:RA-consensus-general-network}{\ \\}
	(i) Given a poset of irreducible components, each component corresponding to a maximal element converges almost surely to either \(\matr{0}\) or \(\matr{1}\) under the RA model. {\ \\} {\ \\}
	(ii) The RA model converges to consensus almost surely, i.e.,
		\begin{equation}\label{eq:RA-convergence-definition}
			\matr{x}_t \xrightarrow{a.s.} \matr{x}_\infty \in \{\matr{0},\matr{1}\},
		\end{equation}
		if and only if all maximal elements in the poset of irreducible components converge to the same value.
	\end{thm}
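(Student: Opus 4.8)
The plan is to bootstrap everything from the irreducible case, using three ingredients already available for strongly connected networks: Theorem~\ref{thm:main} (almost-sure convergence of the whole model to a corner), Lemma~\ref{lem:twocorners} (two agents joined by an edge cannot persistently sit near opposite corners), and the block/Markov-property construction inside the proof of Theorem~\ref{thm:main}. For part~(i) the essential point is purely structural: if $C_s$ is a maximal element of the poset, then no directed edge enters $C_s$ from outside, because an edge $w_{il}>0$ with $i\in C_s$ and $l\notin C_s$ would exhibit a directed path into $C_s$ and hence force $C_s\preceq C_t$ for the component $C_t$ containing $l$, contradicting maximality. Consequently $\sum_{j\in C_s}w_{ij}=1$ for every $i\in C_s$, so $\matr{W}$ restricted to $C_s$ is itself an irreducible stochastic matrix, and by \eqref{eq:RA_update} the opinions of the agents in $C_s$ evolve using only the opinions and actions of agents in $C_s$. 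The restriction of the RA model to $C_s$ is therefore, literally, an RA model on a strongly connected network, so Theorem~\ref{thm:main} applies verbatim and the agents of $C_s$ converge almost surely, as a block, to $\matr{0}$ or to $\matr{1}$.

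For part~(ii), the ``only if'' direction is immediate: if $\matr{x}_t\xrightarrow{a.s.}\matr{x}_\infty\in\{\matr{0},\matr{1}\}$ then every maximal component converges to the common value prescribed by $\matr{x}_\infty$; contrapositively, if the maximal components do not all share the same limit, part~(i) exhibits two maximal components whose limits are $\matr{0}$ and $\matr{1}$ on an event of positive probability, and \eqref{eq:RA-convergence-definition} fails. The ``if'' direction is the substantive one, and I would prove it by induction over the poset $\{C_1,\dots,C_M\}$, treating the components in an order compatible with $\preceq$ so that the maximal elements come first, i.e.\ whenever $C_r\preceq C_s$ with $r\neq s$ the component $C_s$ is handled before $C_r$. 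Assume without loss of generality that every maximal component converges almost surely to $\matr{1}$; this is the base case. For the inductive step let $C_r$ be non-maximal and choose a component $C_s$ covering it, $C_r\preceqdot C_s$ (such a $C_s$ exists because in the finite poset the family of $C_t$ with $C_r\preceq C_t$, $t\neq r$, has a $\preceq$-minimal element, which then covers $C_r$). Since $C_r\preceq C_s$ with $r\neq s$, the inductive hypothesis gives that every agent of $C_s$ converges almost surely to $1$, and, because $C_s$ covers $C_r$, there is a genuine edge $w_{kl}>0$ from some $l\in C_s$ to some $k\in C_r$. Lemma~\ref{lem:twocorners} forbids $k$ from sitting near $0$ while $l$ sits near $1$, so once one knows that $x_{t,k}$ is eventually $\delta$-close to a corner, the conclusion $x_{t,k}\to1$ follows, and the irreducibility of $C_r$ lets us carry ``$\to1$'' along a directed path inside $C_r$ from $k$ to every other agent of $C_r$, applying Lemma~\ref{lem:twocorners} edge by edge. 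Since every component lies below some maximal element and $M$ is finite, the induction reaches all agents and the model converges almost surely to $\matr{1}$, i.e.\ to consensus.

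The step I expect to be the main obstacle is the quantitative inter-component propagation, and in particular the claim that a downstream agent is eventually $\delta$-close to a corner at all: a single incoming edge of weight $w_{kl}<1$ only drags $x_{t,k}$ up to roughly $w_{kl}$, and in a reducible network an agent fed by disagreeing sources can be dragged towards $1/2$, so this step has to be secured under the inductive hypothesis rather than quoted directly. I would do this by adapting the block construction of Theorem~\ref{thm:main} to the up-set $\{C_t:C_r\preceq C_t\}$: on the high-probability event that every component strictly above $C_r$ has entered and remains within $\delta$ of $\matr{1}$, the boundary agents of $C_r$ receive a persistent upward drift whenever their external in-neighbours play $1$, and tracking $\min_{k\in C_r}x_{t,k}$ over long blocks — while using Lemma~\ref{lem:twocorners} around the cycles of $C_r$ to keep all of $C_r$ near the same corner — should push this minimum above $1-\delta$, giving $C_r\to\matr{1}$. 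Keeping the two error budgets (the $\delta$-closeness of the higher components and the $\delta+\epsilon$ slack of Lemma~\ref{lem:twocorners}) simultaneously small, and noting that several covering components cause no conflict since all converge to $\matr{1}$ by hypothesis, is then bookkeeping that parallels the proof of Theorem~\ref{thm:main}.
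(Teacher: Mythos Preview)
Your proposal is correct and follows essentially the same strategy as the paper: reduce each maximal component to an autonomous irreducible RA model (so Theorem~\ref{thm:main} applies verbatim), then propagate the common limit downward through the Hasse diagram by repeated application of Lemma~\ref{lem:twocorners} along edges, using the irreducibility of each component to reach all its agents. You are in fact more scrupulous than the paper on the one delicate point---the paper simply invokes Lemma~\ref{lem:union_corners} to assert that downstream agents are eventually $\delta$-close to a corner, whereas you rightly flag that this lemma was proved under irreducibility and sketch the inductive block-construction argument needed to secure it in the reducible setting.
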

	\begin{figure}
		\centering
		\begin{tikzpicture}[auto,main_node/.style={circle,fill=blue!10,draw,minimum size=.4em,inner sep=1pt}, edge/.style={->,> = latex'}]
			
			\node[main_node] (1) at (-4.5,0)  {\(C_1\)};
			\node[main_node] (2) at (-2.5,0)  {\(C_2\)};
			\node[main_node] (3) at (-5.5,-1.5)  {\(C_3\)};
			\node[main_node] (4) at (-3.5,-1.5)  {\(C_4\)};
			\node[main_node] (5) at (-1.5,-1.5)  {\(C_5\)};
			\node[main_node] (6) at (-4.5,-3)  {\(C_6\)};
			\node[main_node] (7) at (-2.5,-3)  {\(C_7\)};
			
			\draw[edge] (1) to (3) {};
			\draw[edge] (1) to (4) {};
			\draw[edge] (2) to (4) {};
			\draw[edge] (2) to (5) {};
			\draw[edge] (3) to (6) {};
			\draw[edge] (4) to (6) {};
			\draw[edge] (4) to (7) {};
			\draw[edge] (5) to (7) {};
		\end{tikzpicture}%
		\caption[Short description]{Hasse diagram of the poset of strongly connected components of a (non-specified) network. Here, the maximal elements are \(C_1, C_2\), and the minimal elements are \(C_6, C_7\).}\label{fig:RA-hasse-general}
	\end{figure}
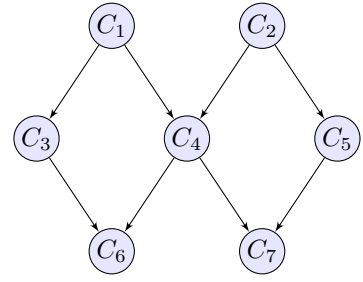
	
	As a special case, consider a network where there is exactly one maximal element in the poset of irreducible components. By Theorem \ref{thm:RA-consensus-general-network} for such a network the RA model converges to consensus almost surely. If the maximal element consists of a single agent, then this agent influences others but is not influenced by anyone else. Even if the initial belief of this agent is strictly between \(0\) and \(1\) it will converge almost surely to either \(0\) or \(1\) and the whole network will reach consensus to that value. In the remaining cases, when the initial belief of this agent (say agent \(k\)) is \(0\) or \(1\), the belief stays constant almost surely as \(x_{t+1,k} = (1-\alpha) x_{t,k} + \alpha a_{t,k}= x_{t,k}\), since  \(x_{t,k}=a_{t,k}\) with probability \(1\). Such an agent is known as a stubborn agent. We obtain the following by-product directly from Theorem \ref{thm:RA-consensus-general-network}.
	\begin{cor}\label{cor:RA-as-stubborn}
		The RA model converge to consensus in the sense of \eqref{eq:RA-convergence-definition} in the presence of a stubborn agent.
	\end{cor}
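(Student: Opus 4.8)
The plan is to derive Corollary \ref{cor:RA-as-stubborn} as an immediate consequence of Theorem \ref{thm:RA-consensus-general-network}(ii). First I would observe that a stubborn agent $k$ — one whose initial belief $x_{1,k}\in\{0,1\}$ and whose update therefore satisfies $x_{t+1,k}=(1-\alpha)x_{t,k}+\alpha a_{t,k}=x_{t,k}$ almost surely for all $t$, as noted in the paragraph preceding the corollary — can be modeled within the reducible-network framework as a singleton strongly connected component $\{k\}$ that receives no incoming edges from any other component (it may have a self-loop $w_{kk}=1$, or more precisely the row of $\matr{W}$ indexed by $k$ is $e_k^T$). Hence $\{k\}$ is a maximal element in the poset of irreducible components: there is no $C_q$ with $C_q\preceq\{k\}$, since no directed path enters $k$ from outside.

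Next I would split into cases according to the structure of the poset. In the simplest and intended case, $\{k\}$ is the \emph{unique} maximal element. Then Theorem \ref{thm:RA-consensus-general-network}(i) tells us the component $\{k\}$ converges almost surely to some value in $\{\matr 0,\matr 1\}$; but since $x_{t,k}\equiv x_{1,k}$ almost surely, that limit is just the constant $x_{1,k}\in\{0,1\}$. With only one maximal element, the condition ``all maximal elements converge to the same value'' in Theorem \ref{thm:RA-consensus-general-network}(ii) is vacuously satisfied, so the whole network converges almost surely to consensus at the value $x_{1,k}$. If instead there are several maximal elements, one of which is $\{k\}$, then Theorem \ref{thm:RA-consensus-general-network}(ii) still applies: the RA model reaches consensus in the sense of \eqref{eq:RA-convergence-definition} precisely when every maximal element agrees, and this can certainly occur (e.g.\ if the remaining maximal elements happen to converge to $x_{1,k}$, an event of positive probability), so consensus ``in the presence of a stubborn agent'' holds. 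I would state the corollary as covering the generic single-stubborn-agent scenario, where $\{k\}$ is the only source component, so that consensus is guaranteed.

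The one subtlety worth a sentence is verifying that the earlier reduction — treating each maximal component as a closed, autonomous subsystem to which Theorem \ref{thm:main} applies — is legitimate when that component is the degenerate singleton $\{k\}$ with a stubborn agent. This is immediate: the scalar recursion $x_{t+1,k}=(1-\alpha)x_{t,k}+\alpha a_{t,k}$ references no other agent, and with $x_{1,k}\in\{0,1\}$ it is constant, so the conclusion of Theorem \ref{thm:main} (the iterate is $\delta$-close to a corner for all large $t$ with probability $\to 1$) holds trivially with probability exactly $1$. Everything downstream — propagating the limit to the components covered by $\{k\}$ via Lemma \ref{lem:twocorners} and the irreducibility argument — is then exactly as in the proof of Theorem \ref{thm:RA-consensus-general-network}. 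I do not anticipate a genuine obstacle here; the only thing requiring care is the bookkeeping of which poset configurations yield \emph{guaranteed} versus merely \emph{possible} consensus, and I would resolve that by phrasing the corollary for the case that the stubborn agent sits at the unique maximal element, which is the situation of interest.
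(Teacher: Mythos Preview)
Your proposal is correct and follows essentially the same route as the paper: the paragraph preceding the corollary already identifies the stubborn agent as a singleton component that is the unique maximal element of the poset, verifies that its belief stays constant almost surely, and then invokes Theorem~\ref{thm:RA-consensus-general-network} directly. Your additional case analysis (multiple maximal elements) and the remark on the degenerate singleton are extra bookkeeping the paper does not spell out, but they do not change the argument.
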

	While it may appear that this result would follow directly from available convergence results for related models with stubborn or extreme agents (notably \cite{yildiz13} and \cite{martins10}), it does not: Reference \cite{yildiz13} considered the voter model and showed only convergence in distribution, and \cite{martins10} dealt with the CODA model and contrarian agents which are not stubborn.

	Theorem \ref{thm:RA-consensus-general-network} together with a standard result from probability theory reveals that the RA model converges to consensus also in \(r\)th moment.
	\begin{cor}\label{cor:RA-moments} {\ \\}
		(i) The RA model converges to consensus in \(r\)th moment for every \(r > 0\), i.e.
		\begin{equation}\label{eq:ms-ordinary}
			\matr{x}_t \xrightarrow{L^r} \matr{x}_\infty \in \{\matr{0},\matr{1}\},
		\end{equation}
		if all maximal elements in the poset defined by the irreducible components in the network converge almost surely to the same value. {\ \\} {\ \\}
		(ii) The RA model converges to consensus in \(r\)th moment for every \(r > 0\) in the presence of a stubborn agent.
	\end{cor}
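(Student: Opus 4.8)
The plan is to obtain $L^r$ convergence for free from the almost sure convergence already in hand, exploiting the fact that the opinion vectors live in the bounded set $[0,1]^N$. The "standard result from probability theory" referred to in the statement is the dominated (equivalently, bounded) convergence theorem. Concretely, I would start from Theorem~\ref{thm:RA-consensus-general-network}: under the stated hypothesis on the maximal elements of the poset of irreducible components, there is an $\{\matr 0,\matr 1\}$-valued random variable $\matr x_\infty$ with $\matr x_t \xrightarrow{a.s.} \matr x_\infty$. Fix $r>0$. Because every coordinate satisfies $x_{t,i}\in[0,1]$ and $x_{\infty,i}\in\{0,1\}$, one has the deterministic domination $\lvert x_{t,i}-x_{\infty,i}\rvert^{r}\le 1$ for all $t$ and all $i$, and the constant $1$ is integrable on \ps.

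The key steps are then short. First, apply the bounded convergence theorem coordinatewise: from $\lvert x_{t,i}-x_{\infty,i}\rvert^{r}\to 0$ almost surely together with the uniform domination, $\E\bigl[\lvert x_{t,i}-x_{\infty,i}\rvert^{r}\bigr]\to 0$ as $t\to\infty$, for each $i=1,\dots,N$. Second, sum over the finitely many coordinates to get
\[
\E\Bigl[\sum_{i=1}^N \lvert x_{t,i}-x_{\infty,i}\rvert^{r}\Bigr]=\sum_{i=1}^N\E\bigl[\lvert x_{t,i}-x_{\infty,i}\rvert^{r}\bigr]\longrightarrow 0,
\]
which is precisely $\matr x_t\xrightarrow{L^r}\matr x_\infty$ (since $\reals^N$ is finite-dimensional all norms are equivalent, so the choice of norm is immaterial; for $0<r<1$ the expression $\sum_i\lvert\cdot\rvert^{r}$ is read as the associated metric rather than a norm, which does not affect the estimate). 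This settles part~(i). For part~(ii), I would recall from the discussion immediately preceding the corollary that a network containing a stubborn agent is exactly the special case in which the poset of irreducible components has a single maximal element, namely the singleton consisting of that agent; a poset with one maximal element satisfies the hypothesis of part~(i) vacuously, so Theorem~\ref{thm:RA-consensus-general-network}(ii) gives almost sure convergence to consensus and part~(i) then upgrades it to $r$th moment convergence.

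I do not expect a genuine obstacle here: the entire argument is an invocation of the bounded convergence theorem on top of an already-established almost sure statement. The only points that merit a line of care are (a) that $\matr x_\infty$ is a bona fide random variable, so that the expectations above are well defined — this is automatic, being the almost sure limit of measurable maps on \ps — and (b) the bookkeeping remark that for $0<r<1$ the phrase ``convergence in $r$th moment'' must be interpreted through the quantity $\E\bigl[\lvert\cdot\rvert^{r}\bigr]$ rather than a norm; neither changes anything in the computation.
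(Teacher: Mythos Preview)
Your proposal is correct and follows essentially the same route as the paper: invoke Theorem~\ref{thm:RA-consensus-general-network} for almost sure convergence, use the deterministic bound coming from $x_{t,i}\in[0,1]$, and apply the dominated (bounded) convergence theorem to upgrade to $L^r$; part~(ii) is then the single-maximal-element special case. The paper's write-up is terser (it simply cites the standard result), but the argument is the same.
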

	\begin{proof}
		(i) By Theorem \ref{thm:RA-consensus-general-network}, \(\matr{x}_t \xrightarrow{a.s.} \matr{x}_\infty \in \{\matr{0},\matr{1}\}\). We also know, by definition of the RA model, that \(0 \leq x_{t,i} \leq 1\) for all \(i = 1,2,\dots,N\) and for all \(t \geq 1\), and therefore it holds that \(0 \leq x_{t,i}^r \leq 1\), for all \(r > 0\). By a standard result in probability theory (see, e.g., \cite[Theorem 2 (b)]{Ferguson}, whose proof essentially is an application of the Dominance Convergence Theorem) it follows that \(\matr{x}_t \xrightarrow{L^r} \matr{x}_\infty \in \{\matr{0},\matr{1}\}\).
		(ii) This is a special case of (i) in which the stubborn agent constitutes the only maximal element in the poset of strongly connected components.
	\end{proof}

	\subsection{Numerical examples}
		We illustrate Theorem 2 with two numerical examples, both featuring the reducible network in Figure \ref{fig:RA-smallnetwork}. For any given agent, the incoming edges have equal weights that sum to one. In Figure \ref{fig:conv-numerical}, the network converges to consensus under the RA model since all maximal elements in the corresponding Hasse diagram tend to the same value (namely 1). In Figure \ref{fig:non-conv-numerical}, however, the maximal elements tend to different values, so the dynamics model does not converge.
	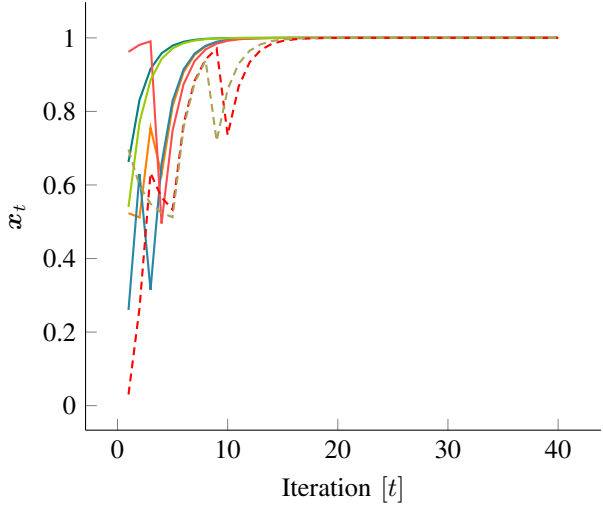
\begin{figure}
		\centering
%		\begin{subfigure}{0.5\textwidth}
%		\centering
		\begin{tikzpicture}[scale=1]
			\begin{axis}[
				axis lines*=left,
				cycle list name=exotic,
				xlabel={Iteration \([t]\)},
				ylabel={\(\matr{x}_t\)},
				ytick = {0,0.2,0.4,0.6,0.8,1},
				yticklabels = {0,0.2,0.4,0.6,0.8,1},
				xtick =  {0,10, ..., 40},
				xticklabels={0,10, ..., 40}
				]
				\pgfplotstableread[col sep=comma]{example_convergence.dat}\datatable
				\addplot +[thick,mark = none] table[x index=0, y index = 1] {\datatable};
				\addplot +[thick,mark = none] table[x index=0, y index=2] {\datatable};
				\addplot +[thick, mark = none] table[x index=0, y index=3] {\datatable};
				\addplot +[thick, mark = none] table[x index=0, y index=4] {\datatable};
				\addplot +[thick, mark = none] table[x index=0, y index=5] {\datatable};
				\addplot +[thick, mark = none] table[x index=0, y index=6] {\datatable};
				\addplot +[thick, mark = none] table[x index=0, y index=7] {\datatable};
			\end{axis}
		\end{tikzpicture}
		\caption{An example in which the RA model converges for a reducible network. Both of the maximal elements in the Hasse diagram (see Figure \ref{fig:RA-hasse}) tend to 1. The nodes \(v_6\) and \(v_7\) are marked with dashed lines.}
		\label{fig:conv-numerical}
%	\end{subfigure}
%	\hfill
%	\begin{subfigure}{0.5\textwidth}
%	\centering
%	\centering
%	\begin{tikzpicture}[scale=1]
%		\begin{axis}[
%			axis lines*=left,
%			cycle list name=exotic,
%			xlabel={Iteration \([t]\)},
%			ylabel={\(\matr{x}_t\)},
%			ytick = {0,0.2,0.4,0.6,0.8,1},
%			yticklabels = {0,0.2,0.4,0.6,0.8,1},
%			xtick =  {1,5,10,15,20},
%			xticklabels={1,5,10,15,20}
%			]
%			\pgfplotstableread[col sep=comma]{example_convergence_zoom.dat}\datatable
%			\addplot +[thick,mark = none] table[x index=0, y index = 1] {\datatable};
%			\addplot +[thick,mark = none] table[x index=0, y index=2] {\datatable};
%			\addplot +[thick, mark = none] table[x index=0, y index=3] {\datatable};
%			\addplot +[thick, mark = none] table[x index=0, y index=4] {\datatable};
%			\addplot +[thick, mark = none] table[x index=0, y index=5] {\datatable};
%			\addplot +[thick, mark = none] table[x index=0, y index=6] {\datatable};
%			\addplot +[thick, mark = none] table[x index=0, y index=7] {\datatable};
%		\end{axis}
%	\end{tikzpicture}
%	\caption{A zoomed-in version of Figure \ref{fig:conv-example}.}\label{fig:conv-example-zoomed}
%\end{subfigure}
%\caption{An example in which the RA model   converges.}\label{fig:conv-numerical}
	\end{figure}

	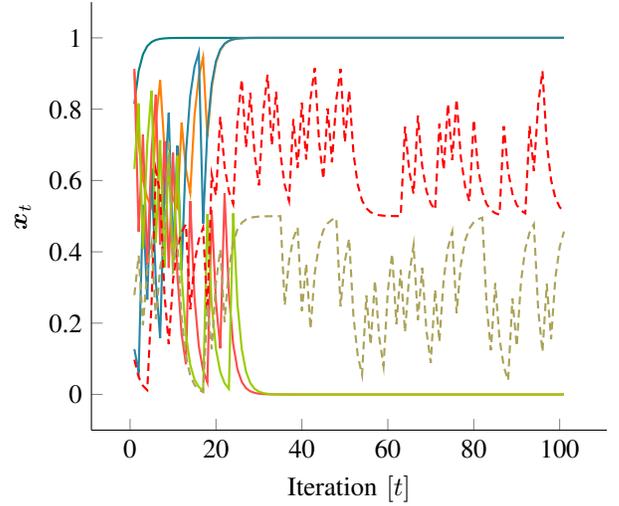
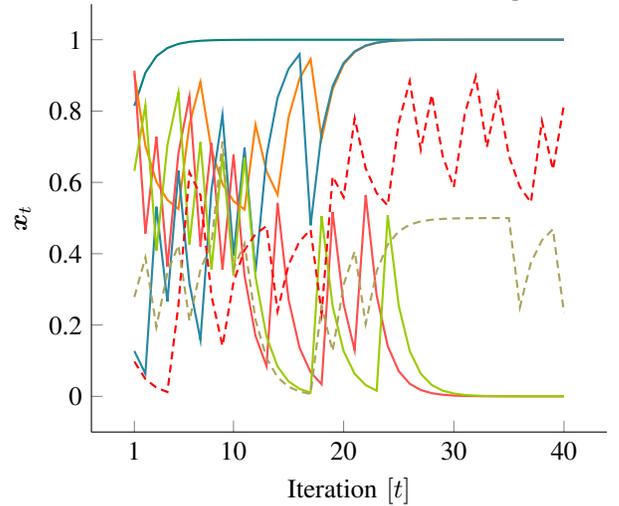
\begin{figure}[h!]
		\centering
		\begin{subfigure}{0.5\textwidth}
			\centering

		\begin{tikzpicture}[scale=1]
			\begin{axis}[
				axis lines*=left,
				cycle list name=exotic,
				xlabel={Iteration \([t]\)},
				ylabel={\(\matr{x}_t\)},
				ytick = {0,0.2,0.4,0.6,0.8,1},
				yticklabels = {0,0.2,0.4,0.6,0.8,1},
				xtick =  {0,20,...,100},
				xticklabels={0,20,...,100}
				]
				\pgfplotstableread[col sep=comma]{example_divergence.dat}\datatable
				\addplot +[thick,mark = none] table[x index=0, y index = 1] {\datatable};
				\addplot +[thick,mark = none] table[x index=0, y index=2] {\datatable};
				\addplot +[thick, mark = none] table[x index=0, y index=3] {\datatable};
				\addplot +[thick, mark = none] table[x index=0, y index=4] {\datatable};
				\addplot +[thick, mark = none] table[x index=0, y index=5] {\datatable};
				\addplot +[thick, mark = none] table[x index=0, y index=6] {\datatable};
				\addplot +[thick, mark = none] table[x index=0, y index=7] {\datatable};
			\end{axis}
		\end{tikzpicture}
		\caption{The  RA model does not  converge to consensus. The agents belonging to the strongly connected components \(C_1\) and \(C_2\) (see Figure \ref{fig:RA-hasse}) tend to 1, and the agents belonging to \(C_3\) tend to 0. Since \(C_1\) and \(C_3\) are maximal elements in the Hasse diagram, the nodes \(v_6\) and \(v_7\) (dashed lines), constituting the minimal element \(C_4\), are doomed to fluctuate between these two extreme points.}\label{fig:non-conv-example}
		\end{subfigure}
		\hfill
		\begin{subfigure}{0.5\textwidth}
		\centering
		\begin{tikzpicture}[scale=1]
		\begin{axis}[
			axis lines*=left,
			cycle list name=exotic,
			xlabel={Iteration \([t]\)},
			ylabel={\(\matr{x}_t\)},
			ytick = {0,0.2,0.4,0.6,0.8,1},
			yticklabels = {0,0.2,0.4,0.6,0.8,1},
			xtick =  {1,10,20,30,40},
			xticklabels={1,10,20,30,40}
			]
			\pgfplotstableread[col sep=comma]{example_divergence_zoom.dat}\datatable
			\addplot +[thick,mark = none] table[x index=0, y index = 1] {\datatable};
			\addplot +[thick,mark = none] table[x index=0, y index=2] {\datatable};
			\addplot +[thick, mark = none] table[x index=0, y index=3] {\datatable};
			\addplot +[thick, mark = none] table[x index=0, y index=4] {\datatable};
			\addplot +[thick, mark = none] table[x index=0, y index=5] {\datatable};
			\addplot +[thick, mark = none] table[x index=0, y index=6] {\datatable};
			\addplot +[thick, mark = none] table[x index=0, y index=7] {\datatable};
		\end{axis}
	\end{tikzpicture}
	\caption{A zoomed-in version of Figure \ref{fig:non-conv-example}.}\label{fig:non-conv-example-zoomed}
		\end{subfigure}
	\caption{An example in which the RA model does not converge.}\label{fig:non-conv-numerical}
\end{figure}

	\section{A critique of the proof of Theorem 1 in \cite{scaglione}}\label{sec:critique}
	In \cite[Theorem 1]{scaglione}, it is claimed that the RA model described by \eqref{eq:RA_update} leads to herding, in the sense that
	\begin{equation}\label{eq:scaglioneclaim}
	  \mathbb{P} \paren{\lim_{t \to \infty}x_{t,i} \in \{0,1\}} = 1, \ \text{for all } i=1,2,\dots N,
	\end{equation}
	and moreover that the limit is identical to all agents.\footnote{With our notation, which will be used throughout this section.} This is the same claim as our Theorem~\ref{thm:main}.	
	The main steps of the proof of \cite[Theorem 1]{scaglione} are: 
	\begin{enumerate}
	\item First, the random variable \(q_t =\matr{ \matr{\pi}}^T \matr{x}_t\) is defined, where \(\matr{\pi}\) is a left eigenvector of \(\matr{W}\) with eigenvalue \(1\), and it is shown that \(q_t\) is a martingale with respect to \(\matr{x}_t\), i.e.,
	\begin{equation}
	\E\{q_{t+1}\mid \matr{x}_t\} = q_t.
	\end{equation}
	\item Then the martingale difference sequence 
	\begin{equation}
		\mathrm{\Delta}q_t = q_t - q_{t-1}
	\end{equation}
	is shown to satisfy \(\mathrm{\Delta}q_t \xrightarrow{a.s} 0\), as \(t\to\infty\).
	\item The almost sure convergence in 2) is used to show that \(\mathrm{\Delta} q_t\) converges in the mean square sense, i.e.,  \(\E \{(\mathrm{\Delta}q_t)^2\}\to 0\), as \(t\to\infty\). Up to this point in the proof we are in agreement with all arguments.
	 \end{enumerate}
	 
	 Our main concern with the proof is the following.
	 The proof claims that 
	 \begin{quotation}
	 	``since for all \(i\), \(\pi_i > 0\), the MS convergence implies that
	 	\begin{equation}\label{eq:MS-leads-to-claim}
	 		\lim_{t\to\infty}x_{t,i}(1-x_{t,i}) = 0, \quad \forall i.\text{''}
	 	\end{equation}
	 \end{quotation}
	  It is not clear in what sense one should understand the convergence in \eqref{eq:MS-leads-to-claim}.
	  Clearly, convergence holds in the mean square sense: 
	Similar to our result in \cite[Equation (23)]{abrahamsson2019}, we can show that
		\begin{equation}\label{eq:MS-product}
			\lim_{t\to\infty}\E\{(x_{t,i}(1-x_{t,i}))^2\} = 0, \quad \forall n.
		\end{equation}
	However, for the convergence in \eqref{eq:MS-leads-to-claim} to be useful for subsequent arguments in the proof in \cite{scaglione},
	\eqref{eq:MS-leads-to-claim} must hold almost surely. More specifically,
	the convergence is later used (in \cite[Equation (18)]{scaglione}) to argue that
	\begin{equation}\label{eq:eitheror-claim}
	\lim\limits_{t\to\infty}x_{t,i} = 0 \text{ or } \lim\limits_{t\to\infty}x_{t,i} = 1.
	\end{equation}
	But \eqref{eq:MS-leads-to-claim}, interpreted in the sense of mean square convergence, does not  imply convergence
	in \eqref{eq:MS-leads-to-claim} almost surely, let alone does it imply \eqref{eq:eitheror-claim}. As a counterexample, consider a distribution which always results in the outcome
		\begin{equation}\label{eq:counterexample}
			x_{t,i} = \begin{cases}
				1, & t \text{ odd},\\
				0, & t \text{ even}.
			\end{cases}
		\end{equation}
	Then \(x_{t,i}(1-x_{t,i}) = 0\) for all \(t\), but \( \lim\limits_{t\to\infty}x_{t,i}\) does not exist, regardless of the mode of convergence. Note, however, that in the light of Theorem \ref{thm:main}, such a distribution is not compatible with the RA model: By the theorem there exists a time \(t_0\) such that, for all \(t \geq t_0\), the vector \(\matr{x}_t\) stays \(\delta\)-close to either of the vectors \(\matr{0}\) or \(\matr{1}\). Since this holds for any \(\delta\), the sequence in \eqref{eq:counterexample} does not lie in the sample space \(\Omega\). Hence, while \eqref{eq:counterexample} serves as a counterexample for the claim discussed in this section, it does not contradict the convergence result of the RA model.
	
	While the issue just explained constitutes our main point of criticism,
	we note in passing that \cite[Equation (16)]{scaglione} as written is inaccurate.
	That equation  states that
	\begin{equation}
	\begin{aligned}
	\Var (\mathrm{\Delta} q_t &\mid q_{t-1}) = \E\{(\mathrm{\Delta} q_t)^2 \mid q_{t-1}\}\\
	&= \alpha^2\sum_{i=1}^N \pi_i^2\Var(a_{t,i})\\
	&= \alpha^2 \sum_{i=1}^N\pi_i^2 x_{t,i}(1-x_{t,i}),
	\end{aligned}
	\end{equation}
	but should read
	\begin{equation}
	\begin{aligned}
	\Var(\mathrm{\Delta} q_t &\mid \matr{x}_{t-1}) = \E\{(\mathrm{\Delta} q_t)^2 \mid \matr{x}_{t-1}\}\\
	&= \alpha^2\sum_{i=1}^N \pi_i^2\Var(a_{t-1,i} \mid x_{t-1,i})\\
	&= \alpha^2 \sum_{i=1}^N\pi_i^2 x_{t-1,i}(1-x_{t-1,i}).
	\end{aligned}
	\end{equation}
	The corrections that should be applied to \cite[Equation (16)]{scaglione}  are the following:
	\begin{itemize}
		\item The conditional variance and conditional expectation on the first line should be with respect to \(\matr{x}_{t-1}\), not  \(q_{t-1}\). Otherwise one cannot make use of the definition \(q_t = \matr{\pi}^T\matr{x}_t\) to simplify the expression, and hence the subsequent equality would not hold.
		\item On the second line, the variance should be the conditional variance \(\Var(a_{t-1,i} \mid x_{t-1,i})\). Note also the time shift, which follows from the previous line.
		\item On the third line, the summand should be\\ \(\pi_i^2x_{t-1,i}(1-x_{t-1,i})\), i.e., once again the time variable should be shifted.
	\end{itemize}

	\section{Discussion}\label{sec:discussion}
	In this section we discuss the RA model in relation to two other questions. First, we consider what can be said if the update step sizes are time-variant. Second, we comment on the similarities and differences between the RA model and information propagation schemes in multi-agent optimization.
	\subsection{On the convergence of time-variant models}
	What happens if the step-size parameter \(\alpha\) is time-variant? Will the RA model still converge to consensus, and if so, under what conditions? While we do not have a firm answer to these questions, we can approach this discussion by studying two related examples; one deterministic model and one that involves randomness. Let us first consider a deterministic model. We will see that if the step-size decreases quickly enough, then the model might never reach a consensus.
	Let \begin{equation}\label{eq:simple_update_scheme}
		\matr{x}_{t+1} = \matr{W}_t \matr{x}_t,
	\end{equation}
	where, for some \(\beta_t \in (0,1/2]\),
	\begin{equation}
		\matr{W}_t = \begin{pmatrix}
			1 - \beta_t & \beta_t\\
			\beta_t & 1 - \beta_t
		\end{pmatrix}.
	\end{equation}
	Consider the eigenvalue decomposition of \(\matr{W}_t\),
	\begin{equation}
		\matr{W}_t = \matr{U}\matr{D}_t\matr{U}^T,
	\end{equation}
	where
	\begin{equation}
		\matr{U} = \frac{1}{\sqrt{2}}\begin{pmatrix}
			1 & -1\\
			1 & 1
		\end{pmatrix}, \qquad 
		\matr{D}_t =
		\begin{pmatrix}
			1 & 0\\
			0 & 1-2\beta_t
		\end{pmatrix}.
	\end{equation}
	Furthermore,
	\begin{equation}
		\matr{x}_T =\paren{ \prod_{t = 1}^{T} \matr{W}_t} \matr{x}_0 = \matr{U}\paren{\prod_{t = 1}^{T} \matr{D}_t} \matr{U}^T \matr{x}_0.
	\end{equation}
	In the time-invariant case we have 
	\begin{equation}
		\beta_t = \beta, \qquad \matr{D}_t = \matr{D} = 
		\begin{pmatrix}
			1 & 0\\
			0 & 1-2\beta
		\end{pmatrix},
	\end{equation}
	so \begin{equation}
			\lim_{T \to \infty}  \matr{U}\paren{\prod_{t = 1}^{T} \matr{D}_t} \matr{U}^T = \matr{U}
			\begin{pmatrix}
				1 & 0 \\
				0 & 0
			\end{pmatrix}
			\matr{U}^T = \frac{1}{2}\matr{1}\matr{1}^T, 
		\end{equation}
		which implies that
		\begin{equation}
			\lim_{T \to \infty} \matr{x}_T = \frac{1}{2}\matr{1}\matr{1}^T\matr{x}_0.
	\end{equation}
	
	However, suppose that \(\beta_t = \dfrac{1}{2^t}\beta\) (say). Then
	\begin{equation}
		\matr{D}_t = 
		\begin{pmatrix}
			1 & 0 \\
			0 & 1-\frac{\beta}{2^{t-1}}
		\end{pmatrix}.
	\end{equation}
	We now apply Lemma \ref{lem:prod-cont-decr}, with \(\alpha = \frac{1}{2}, N = 1\) and \(\gamma = 2\beta\), by which we obtain
	\begin{equation}
		  \prod_{t=0}^\infty \paren{1-\frac{\beta}{2^{t-1}}} = \prod_{t=0}^\infty \paren{1-\paren{\frac{1}{2}}^t 2\beta} = g_{\frac{1}{2},1}(2\beta) > 0.
	\end{equation}
	Therefore, in the limit as \(T \to \infty\), the product \(\prod_{t=1}^{T}\matr{D}_t\) converges to a diagonal matrix for which both the diagonal entries are positive. Hence the network will not reach consensus (except if \(\matr{x}_0 = \lambda \matr{1}\) for some constant \(\lambda \in \reals\)). 
	
	Now, let us turn our discussion to the stochastic model in \cite{tabhaz-salehi08}, in which the update schemes is the same as in \eqref{eq:simple_update_scheme}, but the weight matrices \(\{\matr{W}_t\}_{t=1}^\infty\) are i.i.d. stochastic matrices with positive diagonal elements. The authors show that the model reaches consensus (although not an average consensus since the limit is random) asymptotically under fairly mild assumptions. Specifically, consensus is reached if and only if the second largest eigenvalue (in magnitude) of \(\E\{\matr{W}_t\}\) is strictly less than \(1\).
	
	Hence, we have considered a deterministic model in which consensus was not reached, and a stochastic model in which it was.	With these two examples in mind, we do not believe that there is a direct time-variant variation of the RA model for which convergence to consensus is easily established.
	
	\subsection{The RA model in relation to multi-agent optimization}
	In distributed optimization, convergence to consensus is often necessary for the optimization scheme to work \cite{yang19,koppel15,terelius11,boyd11}. For example, this is true for consensus problems \cite{konstantinos12}, where a network of agents must achieve a consistent opinion by exchanging information locally; applications include vehicle formation and load balancing (see, e.g., \cite[Section IA]{dimakis10}). Hence it might seem natural that the RA model and other CODA-like models should be very closely related to multi-agent optimization algorithms, since they all are concerned with convergence to consensus. However, in opinion dynamics, reaching consensus is not necessarily desirable: In some contexts, such as political elections, consensus is unlikely to be observed empirically. In fact, as remarked in the introduction, the CODA-like models were designed to prevent the guarantee of reaching consensus, with the ambition of being more realistic. The RA model serves the purpose of showing that a CODA model with a simple probabilistic interaction fails to prevent convergence to consensus. Therefore, while many of the mathematical models and techniques used in opinion dynamics might be similar to those used in distributed optimization, the two research fields are often conceptually very different.
	
	\section{Conclusions}\label{sec:openproblems}
	In this paper we have shown that the Random Actions model converges to consensus almost surely and in \(r\)th moments, and that this holds also in the the presence of a stubborn agent. More generally, we showed that the assumption of strong connectedness (irreducibility) is sufficient but not necessary for the convergence result. While Theorem \ref{thm:main} is equivalent to the statement of \cite[Theorem 1]{scaglione}, our critique in Section \ref{sec:critique} casts doubts on the validity of the proof. From \eqref{eq:MS-product} it is clear that the product \(x_ {t,i}(1-x_{t,i})\) converges to zero in mean square, but by the counterexample \eqref{eq:counterexample} this does not necessarily mean that each factor converges to zero in mean square. We have shown that the counterexample is not compatible with the RA model, but this is not clear in \cite{scaglione}.
	
	We now comment on the convergence result itself. 
	Opinion dynamics models with a linear update scheme are often expected to reach consensus. This, however, is rarely observed empirically (cf. Abelson's ``community cleavage problem'', discussed in Section I), and therefore consensus-reaching models are not likely to realistically capture the exchange of opinions.  In order to rectify this issue, CODA-like models were introduced. For this reason it is interesting that the RA model, being a modified CODA model which incorporates a specific form of randomness, again leads to consensus. One way to avoid reaching consensus is to introduce a nonlinearity in the model. For example, in \cite[Section III-A]{scaglione} the RA model was modified to a Hegselmann-Krausse-like update scheme, so that an agent only updates his opinion if the observations of the neighbors' actions are sufficiently close to his current opinion. The authors showed that in such a model there is no guarantee for convergence; the opinions might fluctuate indefinitely.
	
	The main takeaway from our paper is the following: It is the randomness of the observed actions that drives the herding behavior, and the limiting vector of actions (\(\matr{0}\) or \(\matr{1}\)) is unpredictable even when the initial opinion distribution \(\matr{x}_1\), the parameter \(\alpha\) and the edge weights are known. That is a clear difference from other linear update schemes but which are deterministic, like the DeGroot model \cite{degroot}, where it is sufficient to know the initial values and edge weights to predict the limiting vector of opinions. As remarked in \cite{scaglione}, this suggests that the presence of random effects in opinion dynamics can constitute a significant hurdle in the analysis of long-term behavior in social networks.

%	Finally, we remark that the RA model without a stubborn agent behaves chaotically in simulations. For example, all agents might come very close towards a consensus to, say, \(0\) only to collectively jump in a few time steps towards \(1\). It is therefore hard to draw conclusions about convergence to a consensus based on observations of simulated realizations.
	\appendix
	In order to prove Lemma \ref{lem:conv_bti}, we need a bit of preparation. First we construct a martingale \(q_t\) and prove some useful properties that it exhibits. The martingale construction is the same as in \cite{scaglione}, but in our proof it is subsequently used in a different manner. We then construct a new random variable \(\matr{y}_t\) based on the martingale difference \(q_{t+1} - q_t\), and show that it converges to \(0\) in the mean square sense, i.e., \(\matr{y}_t \xrightarrow{m.s.} 0\). Finally, this is used in the succeeding proofs. We begin by defining
	\begin{equation}
	q_t = \matr{\pi}^T\matr{x}_t = \sum_{i=1}^N \matr{\pi}_ix_{t,i},
	\end{equation}
	and show that \(q_t\) is a martingale w.r.t. \(\matr{x}_t\). We have
	\begin{equation}\label{eq:martingale}
	\begin{aligned}
	\E\{q_{t+1} \mid \matr{x}_t\} &= \E\{\matr{\pi}^T\matr{x}_{t+1} \mid \matr{x}_t\}\\
	&= \E\left\{\matr{\pi}^T\Big((1-\alpha)\matr{x}_t + \alpha \matr{W}\matr{a}_t\Big) \mid \matr{x}_t\right\}\\
	&= (1-\alpha)\matr{\pi}^T\matr{x}_t + \alpha\matr{\pi}^T\E\{\matr{a}_t \mid \matr{x}_t\}\\
	&= (1-\alpha)q_t + \alpha\matr{\pi}^T\matr{x}_t\\
	&= (1-\alpha)q_t + \alpha q_t = q_t,
	\end{aligned}
	\end{equation}
	and further, for all \(t \geq 1\),
	\begin{equation}\label{eq:martingale_bound}
		\E \{ q_t\} =  \E \{\matr{\pi}^T \matr{x}_t\} \leq 1,
	\end{equation}
	so \(q_t\) is indeed a martingale w.r.t. \(\matr{x}_t\). Since \(\E\{q_t\}\) is bounded by a constant for each \(t\), it follows from the martingale convergence theorem \cite[Section 12.3]{Grimmett} that
	\begin{equation}\label{eq:martingale-as-convergence}
		q_t \xrightarrow{\text{a.s.}}q_\infty, \quad t\to\infty,
	\end{equation}
	for some random variable \(q_\infty\). This implies that for the martingale difference sequence, \(\mathrm{\Delta} q_t = q_{t+1} - q_t\), we have
	\begin{equation}\label{eq:mds-as-conv}
		\mathrm{\Delta} q_t \xrightarrow{\text{a.s.}} 0, \quad t\to\infty.
	\end{equation}
	It is also useful to derive an alternative expression for the martingale difference sequence:
	\begin{equation}\label{eq:deltaq}
	\begin{aligned}
	\mathrm{\Delta} q_t &= q_{t+1} - q_t = \matr{\pi}^T(\matr{x}_{t+1} - \matr{x}_t)\\
	&= \matr{\pi}^T\paren{(1-\alpha)\matr{x}_t + \alpha W\matr{a}_t - \matr{x}_t}\\
	&= \alpha\matr{\pi}^T (W\matr{a}_t - \matr{x}_t) = \alpha\matr{\pi}^T (\matr{a}_t - \matr{x}_t).
	\end{aligned}
	\end{equation}
	
	Define \(\matr{y}_t = \matr{a}_t-\matr{x}_t\). Then we have \(\mathrm{\Delta} q_t = \alpha \matr{\pi}^T\matr{y}_t\), and since \(\alpha \neq 0\), it follows by \eqref{eq:mds-as-conv} that
	\begin{equation}
	 \matr{\pi}^T\matr{y}_t \xrightarrow{\text{a.s.}} 0, \quad t\to\infty.
	\end{equation}
	Let \(f_{x_{t,i}}(x)\) and \(f_{y_{t,i}}(y)\)  denote the probability density functions of  \(x_{t,i}\) and \(y_{t,i}\), respectively. Then
	\begin{equation}
	\begin{aligned}
	f_{y_{t,i}}(y) &= (1+y)f_{x_{t,i}}(-y) + (1-y)f_{x_{t,i}}(1-y)\\
	&=
	\begin{cases}
	(1-y)f_{x_{t,i}}(1-y), & 0 < y \leq 1 \\
	(1+y)f_{x_{t,i}}(-y), & -1 \leq y < 0\\
	f_{x_{t,i}}(0) + f_{x_{t,i}}(1), & y = 0.
	\end{cases}
	\end{aligned}
	\end{equation}
	Observe that since \(0 \leq x_{t,i} \leq 1\), we have \(|y_{t,i}| \leq 1\). Furthermore, if \(\delta \leq x_{t,i} \leq 1-\delta\) for some \(\delta \in (0,1/2)\), then \(|y_{t,i}| \geq \delta\). Similarly,
	\begin{equation}
	\begin{aligned}
	&|y_{t,i}| = |a_{t,i}-x_{t,i}| \geq \delta\\
	 \implies 
	&\begin{cases}
	a_{t,i}-x_{t,i} \geq \delta, \quad \text{or}\\
	a_{t,i}-x_{t,i} \leq -\delta
	\end{cases}\\
	 \implies
	&\begin{cases}
	x_{t,i} \leq a_{t,i} - \delta \leq 1-\delta, \quad \text{or}\\
	x_{t,i} \geq a_{t,i} + \delta \geq \delta,
	\end{cases}
	\end{aligned}
	\end{equation}
	so in fact, for all \(\delta \in (0,1/2)\), we have
	\begin{equation}\label{eq:bound_y_delta}
	\delta \leq x_{t,i} \leq 1-\delta \text{ if and only if } |y_{t,i}| \geq \delta.
	\end{equation}
	The following two lemmas establish that for each time step \(t\), the random variables \(y_{t,1},\dots,y_{t,N}\) are uncorrelated, and furthermore that \(\matr{y}_t \xrightarrow{m.s.} \matr{0}\).
	\begin{lem}\label{lem:uncorrelated}
		\(y_{t,i}\) and \(y_{t,j}\), \(1\leq i < j \leq N\), \(t = 1,2,\dots\), are uncorrelated.
	\end{lem}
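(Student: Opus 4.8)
The plan is to compute the covariance $\mathrm{Cov}(y_{t,i},y_{t,j})$ directly and show that it vanishes, the whole argument resting on the conditional independence of the actions at a fixed time step given the opinion vector $\matr{x}_t$.

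First I would verify that each $y_{t,i}$ is centered. Conditioning on $\matr{x}_t$ and using that $a_{t,i}\mid\matr{x}_t$ is Bernoulli with mean $x_{t,i}$,
\begin{equation}
\E\{y_{t,i}\} = \E\{\E\{a_{t,i}-x_{t,i}\mid\matr{x}_t\}\} = \E\{x_{t,i}-x_{t,i}\} = 0,
\end{equation}
so it remains only to show that $\E\{y_{t,i}y_{t,j}\} = 0$.

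Next comes the key step. Conditionally on $\matr{x}_t$, the coordinates $a_{t,1},\dots,a_{t,N}$ of $\matr{a}_t$ are independent; this is intrinsic to the RA model and is precisely the product form $\P(\{\matr{a}_t=\matr{m}\}\mid\{\matr{x}_t=\matr{x}\}) = \prod_{i=1}^N x_i^{m_i}(1-x_i)^{1-m_i}$ already invoked in the proof of Theorem \ref{thm:main}. Hence $y_{t,i}=a_{t,i}-x_{t,i}$ and $y_{t,j}=a_{t,j}-x_{t,j}$ are conditionally independent given $\matr{x}_t$, and therefore
\begin{equation}
\E\{y_{t,i}y_{t,j}\mid\matr{x}_t\} = \E\{y_{t,i}\mid\matr{x}_t\}\,\E\{y_{t,j}\mid\matr{x}_t\} = 0,
\end{equation}
since each conditional mean equals $x_{t,i}-x_{t,i}=0$. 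Taking expectations and applying the tower property gives $\E\{y_{t,i}y_{t,j}\}=0$, so $\mathrm{Cov}(y_{t,i},y_{t,j})=0$, which is the claim.

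The argument is short; the only point requiring a little care is the bookkeeping in the two-stage probability space $(\Omega',\scriptf',\P')$ and its image under $h$, where the phrase ``conditionally on $\matr{x}_t$'' must be interpreted precisely (as $\P'$ conditioned on the $h$-preimage of $\{\matr{x}_t=\matr{x}\}$) in order to factor the conditional expectation legitimately. I do not anticipate any genuine obstacle beyond this routine verification.
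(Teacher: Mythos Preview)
Your proposal is correct and follows essentially the same approach as the paper: compute $\E\{y_{t,i}\}=0$ via the tower property, then show $\E\{y_{t,i}y_{t,j}\}=0$ by conditioning and using the conditional independence of the actions to factor the conditional expectation. The only cosmetic difference is that you condition on the full vector $\matr{x}_t$ whereas the paper conditions on $(x_{t,i},x_{t,j})$, which makes no substantive difference.
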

			\begin{proof}
		First, we calculate \(\E\{y_{t,i}\}\):
		\begin{equation}
		\begin{aligned}
		\E\{y_{t,i}\} &= \E\{a_{t,i} - x_{t,i}\}\\
		&= \E\{\E\{a_{t,i} - x_{t,i}\mid x_{t,i}\}\}\\
		&= \E\{\E\{a_{t,i}\mid x_{t,i}\} - x_{t,i}\}\\
		&= \E\{x_{t,i} - x_{t,i}\} = \E\{0\} = 0.
		\end{aligned}
		\end{equation}
		Then we calculate the correlation \(\E\{y_{t,i}y_{t,j}\}\):
		\begin{equation}
		\begin{aligned}
		&\E\{y_{t,i}y_{t,j}\} = \E\{(a_{t,i} - x_{t,i})(a_{t,j} - x_{t,j})\}\\
		= {} & \E\{\E\{(a_{t,i} - x_{t,i})(a_{t,j} - x_{t,j})\mid x_{t,i},x_{t,j}\}\}\\
		= {} & \E\{\E\{a_{t,i}-x_{t,i}\mid x_{t,i}\}
		\E\{a_{t,j} - x_{t,j}\mid x_{t,j}\}\}\\
		= {} & \E\{0\} = 0 =  \E\{y_{t,i}\} \cdot \E\{y_{t,j}\}.
		\end{aligned}
		\end{equation}
	\end{proof}
	\begin{lem}\label{lem:msconv}
		\(\E\{y_{t,i}^2\} \to 0,\) as \(t\to\infty\), for \(i = 1,\dots,N\).
	\end{lem}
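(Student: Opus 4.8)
The plan is to transfer the almost sure convergence $\mathrm{\Delta} q_t \xrightarrow{a.s.} 0$ established in \eqref{eq:mds-as-conv} into mean square convergence, and then to decouple the resulting sum using the fact (Lemma \ref{lem:uncorrelated}) that the components $y_{t,1},\dots,y_{t,N}$ are uncorrelated and centered.

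First I would note that $\mathrm{\Delta} q_t = \alpha \matr{\pi}^T \matr{y}_t$ is uniformly bounded: since $|y_{t,i}| \leq 1$ for all $i$ and $t$, and $\matr{\pi} \succ \matr{0}$ with $\mathbf{1}^T\matr{\pi} = 1$, we have $|\mathrm{\Delta} q_t| \leq \alpha \sum_{i=1}^N \pi_i |y_{t,i}| \leq \alpha < 1$ everywhere on the sample space. Hence $(\mathrm{\Delta} q_t)^2$ is dominated by the constant $1$, and combining this with \eqref{eq:mds-as-conv} and the dominated (bounded) convergence theorem gives $\E\{(\mathrm{\Delta} q_t)^2\} \to 0$ as $t \to \infty$.

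Next I would expand the square. Writing $\mathrm{\Delta} q_t = \alpha \sum_{i=1}^N \pi_i y_{t,i}$, we obtain $\E\{(\mathrm{\Delta} q_t)^2\} = \alpha^2 \sum_{i=1}^N \sum_{j=1}^N \pi_i \pi_j \E\{y_{t,i} y_{t,j}\}$. By Lemma \ref{lem:uncorrelated} together with $\E\{y_{t,i}\} = 0$ (shown in its proof), every cross term with $i \neq j$ vanishes, so $\E\{(\mathrm{\Delta} q_t)^2\} = \alpha^2 \sum_{i=1}^N \pi_i^2 \E\{y_{t,i}^2\}$, and therefore $\alpha^2 \sum_{i=1}^N \pi_i^2 \E\{y_{t,i}^2\} \to 0$.

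Finally, each summand is non-negative, so $0 \leq \pi_i^2 \E\{y_{t,i}^2\} \leq \sum_{j=1}^N \pi_j^2 \E\{y_{t,j}^2\} \to 0$; since $\alpha \neq 0$ and $\pi_i > 0$ for every $i$, this forces $\E\{y_{t,i}^2\} \to 0$ for each $i = 1,\dots,N$, which is the claim. There is essentially no hard step here: the only point requiring care is the passage from almost sure to mean square convergence, and that is handled cleanly by the uniform bound $|\mathrm{\Delta} q_t| \leq \alpha$; the rest is bookkeeping with the uncorrelatedness and the strict positivity of $\matr{\pi}$.
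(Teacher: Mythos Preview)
Your proposal is correct and follows essentially the same approach as the paper: pass from the almost sure convergence of $\mathrm{\Delta} q_t$ to mean square convergence via the uniform bound, expand the square, drop the cross terms using Lemma~\ref{lem:uncorrelated}, and then use $\pi_i>0$ together with non-negativity of the summands to conclude. The only cosmetic difference is that the paper first divides out $\alpha$ and works with $\matr{\pi}^T\matr{y}_t$ directly, whereas you carry the factor $\alpha^2$ through to the end.
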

	\begin{proof}
		We have
		\begin{equation}
		\matr{\pi}^T \matr{y}_t \xrightarrow{\text{a.s.}} 0, \quad t\to\infty.
		\end{equation}
		Note that, under the additional observation that \(|\matr{\pi}^T \matr{y}_t| \leq 1\),  convergence almost surely implies convergence in the mean square sense (see, e.g., \cite[Theorem 2 (b)]{Ferguson}). Hence, it follows that
		\begin{equation}
		\E\{(\matr{\pi}^T\matr{y}_t)^2\} \to 0, \quad t\to\infty.
		\end{equation}
		Now we calculate
		\begin{equation}
		\begin{aligned}
		\E\{(\matr{\pi}^T\matr{y}_t)^2\} &= \E\left\{\paren{\sum_{i=1}^N \matr{\pi}_i y_{t,i}}^2 \right\}\\
		&= \E\left\{\sum_{i=1}^N \sum_{j=1}^N \matr{\pi}_i \matr{\pi}_j y_{t,i} y_{t,j} \right\}\\
		&= \sum_{i=1}^N \matr{\pi}_i^2 \E\{y_{t,i}\} + \sum_{i\neq j} \matr{\pi}_i \matr{\pi}_j \E\{y_{t,i}y_{t,j}\}\\
		&= \sum_{i=1}^N \matr{\pi}_i^2 \E\{y_{t,i}^2\}.
		\end{aligned}
		\end{equation}
		Since \(\matr{\pi} \succ \matr{0}\) and \(\E\{y_{t,i}^2\} \geq 0\) for \(i = 1,\dots,N\), it follows that \(\E\{y_{t,i}^2\} \to 0,\) as \(t\to\infty\), for \(i = 1,\dots,N\).
	\end{proof}
	We are now ready to prove the lemmas from Section \ref{sec:results}.
	\begin{proof}[Proof of Lemma \ref{lem:conv_bti}]
		Suppose that the opposite is true, and consider the complement, which we denote by \(\paren{\cdot}^\complement\). Then, for some \(i\), there exists an \(\epsilon > 0\) such that for any \(t\in\naturals\), there exists \(t_0 > t\) such that
		\begin{equation}
			\P\paren{\paren{C_{t_0,i}(\delta)}^\complement} = \P\paren{\delta \leq x_{t_0,i} \leq 1-\delta} > \epsilon,
		\end{equation}
		which in terms of \eqref{eq:bound_y_delta} can be expressed as
		\begin{equation}
			\P\paren{|y_{t_0,i}| \geq \delta} > \epsilon.
		\end{equation}
		We also have
		\begin{equation}
			\begin{aligned}
			 \E\{y_{t_0,i}^2\} &= \int_{-1}^1 y^2 f_{y_{t_0,i}}(y) dy\\
			 &\geq \int_{-1}^{-\delta} y^2 f_{y_{t_0,i}}(y) + \int_\delta^1 y^2 f_{y_{t_0,i}}(y) dy\\
			 &\geq \delta^2 \P\paren{|y_{t_0}| \geq \delta} \geq \epsilon \delta^2,
			\end{aligned}
		\end{equation}
		but this contradicts the result of Lemma \ref{lem:msconv}, which states that \(\E\{y_{t,i}^2\} \to 0\) as \(t~\to~\infty\).
	\end{proof}
	\hrule
	\vs
	\begin{proof}[Proof of Lemma \ref{lem:final}]
		For any \(\delta \in (0,1/2)\) and any \(t \in \naturals\), we have, by using the subadditive property of probability measures together with De Morgan's law,
		\begin{equation}
			\P\paren{C_t(\delta)} \geq \sum_{i=1}^N \P\paren{C_{t,i}(\delta)} - (N-1).
		\end{equation}
		By Lemma \ref{lem:conv_bti}, we obtain
		\begin{equation}
			\lim_{t \to \infty} \sum_{i=1}^N \P\paren{C_{t,i}(\delta)} = \sum_{i=1}^N  \lim_{t \to \infty}\P\paren{C_{t,i}(\delta)} = N,
		\end{equation}
		and therefore
		\begin{equation}
			\lim_{t \to \infty} \P\paren{C_t(\delta)} \geq \lim_{t \to \infty} \sum_{i=1}^N \P\paren{C_{t,i}(\delta)} - (N-1) = 1.
		\end{equation}
	\end{proof}
	\hrule
	\vs
	\begin{proof}[Proof of Lemma \ref{lem:twocorners}]
		We prove the first case, i.e., with \(C_{t,k}^{(0)}(\delta)\) and \(C_{t,l}^{(1)}(\delta)\), since the other case follows easily with symmetrical arguments. We want to show that
		\begin{equation}\label{eq:lemma5-bound}
			\P'\paren{h^{-1}\paren{C_{t,l}^{(1)}(\delta)} \cap \{\omega' \in \Omega': a_l(t) = 0\}} \leq \delta,
		\end{equation}
		and, for sufficiently large \(t\),
		\begin{equation}\label{eq:lemma5-limit}
			\P'\paren{h^{-1}\paren{C_{t,k}^{(0)}(\delta)} \cap \{\omega' \in \Omega': a_l(t) = 1\}} \leq \epsilon,
		\end{equation}
		so that
		\begin{equation}
		\begin{aligned}
			&\P \paren{C_{t,k}^{(0)}(\delta) \cap C_{t,l}^{(1)}(\delta)}\\
			= {} & \P' 	\paren{h^{-1}\paren{C_{t,k}^{(0)}(\delta) \cap C_{t,l}^{(1)}(\delta)}} \\
			= {} & \P' 	\paren{h^{-1}\paren{C_{t,k}^{(0)}(\delta)} \cap h^{-1}\paren{C_{t,l}^{(1)}(\delta)}}\\
			= {} & \P' 	\paren{h^{-1}\paren{C_{t,k}^{(0)}(\delta)} \cap h^{-1}\paren{C_{t,l}^{(1)}(\delta)} \cap \Omega'}\\
			= {} & \P'\left(h^{-1}\paren{C_{t,k}^{(0)}(\delta)} \cap h^{-1}\paren{C_{t,l}^{(1)}(\delta)}\right.\\
			&\quad\quad \left. \phantom{C_{t,k}^{(0)}(\delta)} \cap \{\omega' \in \Omega': a_l(t) = 0\}\right) \\
			& \quad + \P'\left(h^{-1}\paren{C_{t,k}^{(0)}(\delta)} \cap h^{-1}\paren{C_{t,l}^{(1)}(\delta)}\right. \\
				& \quad\quad \left. \phantom{C_{t,k}^{(0)}(\delta)} \cap \{\omega' \in \Omega': a_l(t) = 1\}\right)\\
			\leq {} & \P'\paren{h^{-1}\paren{C_{t,l}^{(1)}(\delta)} \cap \{\omega' \in \Omega': a_l(t) = 0\}} \\
			& \quad + \P'\paren{h^{-1}\paren{C_{t,k}^{(0)}(\delta)} \cap \{\omega' \in \Omega': a_l(t) = 1\}}\\
			\leq {} & \delta + \epsilon,
		\end{aligned}
		\end{equation}
		where we have used the fact that \(\{\omega' \in \Omega': a_l(t) = 0\}\) and \(\{\omega' \in \Omega': a_l(t) = 1\}\) constitute a partition of \(\Omega'\).
		
		The bound in \eqref{eq:lemma5-bound} is easy to see, since
		\begin{equation}
		\begin{aligned}
			&\P'\paren{h^{-1}\paren{C_{t,l}^{(1)}(\delta)} \cap \{\omega' \in \Omega' : a_l(t) = 0\}}\\
			&= \int_{1-\delta}^1 f_{x_l(t),a_l(t)}(x,0) dx\\
			&= \int_{1-\delta}^1 f_{a_l(t) \mid x_l(t)}(0\mid x)f_{x_l(t)}(x) dx\\
			&= \int_{1-\delta}^1 (1-x) f_{x_l(t)}(x) dx\\
			&\leq \delta \int_{1-\delta}^1 f_{x_l(t)}(x) dx \leq \delta.
		\end{aligned}
		\end{equation}
		For the bound in \eqref{eq:lemma5-limit}, note that if the event
		\begin{equation*}
			h^{-1}\paren{C_{t,k}^{(0)}(\delta)} \cap \{\omega' \in \Omega' : a_l(t) = 1\}
		\end{equation*}
		occurs, then \(0~\leq~x_k(t) < \delta\) and \(a_l(t) = 1\).
		Then we have, on the one hand,
		\begin{equation}
			x_k(t+1) = (1-\alpha)x_k(t) + \alpha\sum_{i=1}^N w_{ki}a_{i,t} \geq \alpha w_{kl} \geq \delta,
		\end{equation}
		and on the other hand,
		\begin{equation}
			x_k(t+1) = (1-\alpha)x_k(t) + \alpha\sum_{i=1}^N w_{ki}a_{i,t} \leq (1-\alpha)\delta + \alpha \leq 1-\delta,
		\end{equation}
		which results in \(x_k(t+1) \in [\delta,1-\delta]\). This bound, in turn, implies that
		\begin{equation}\label{eq:lem-twocorners-set-inclusion}
			\begin{aligned}
			&h^{-1}\paren{C_{t,k}^{(0)}(\delta)} \cap \{\omega' \in \Omega' : a_l(t) = 1\}\\
			\subseteq {} & h^{-1}\paren{\paren{C_{t+1,k}(\delta)}^\complement} \subseteq h^{-1}\paren{\paren{C_{t+1}(\delta)}^\complement}.
			\end{aligned}
		\end{equation}
		We observe that \(\delta \leq \frac{1-\alpha}{2-\alpha} < \frac{1}{2}\), so we can apply Lemma \ref{lem:final}, by which we have
		\begin{equation}
			\lim_{t\to\infty} \P(C_{t+1,k}(\delta))  = 1,
		\end{equation}
		or, equivalently, by taking the complement,
		\begin{equation}\label{eq:B_ti_bound}
			\lim_{t\to\infty} \P\paren{\paren{C_{t+1,k}(\delta)}^\complement} = 0.
		\end{equation}
		By combining \eqref{eq:B_ti_bound} with \eqref{eq:lem-twocorners-set-inclusion} we conclude that
		\begin{equation}
		\begin{aligned}
			0 \leq &\lim_{t\to\infty} \P'\paren{h^{-1}\paren{C_{t,k}^{(0)}(\delta)} \cap \{\omega' \in \Omega' : a_l(t) = 1\}}\\
			\leq &\lim_{t\to\infty} \P' \paren{h^{-1}\paren{\paren{C_{t+1}(\delta)}^\complement}} = 0.
		\end{aligned}
		\end{equation}
		This is equivalent to saying that for all \(\epsilon > 0\), there exists a time \(t_\epsilon\)  such that for all \(t \geq t_\epsilon\),
		\begin{equation}
			\P'\paren{h^{-1}\paren{C_{t,k}^{(0)}(\delta)} \cap \{\omega' \in \Omega' : a_l(t) = 1\}} \leq \epsilon.
		\end{equation}
	\end{proof}
	\hrule
	\vs
	\begin{proof}[Proof of Lemma \ref{lem:union_corners}]
		Ultimately, we want to make statements about under what conditions \(\matr{x}_t\) is \(\delta\)-close to the corners \(\mathbf{0}\) and \(\mathbf{1}\). With this in mind, we will consider all other corners first. That is, we will assume \(\matr{m} \not\in \{\mathbf{0},\mathbf{1}\}\). We partition the set of agents into two disjunct subsets:
		\begin{equation}
			\{1,2,\dots,N\} = I_0(\matr{m}) \cup I_1(\matr{m}),
		\end{equation}
		where \(I_a(\matr{m})\) is defined by
		\begin{equation}
			I_a(\matr{m}) = \{i \in \{1,\dots,N\} : m_i = a\}, \quad a\in\{0,1\},
		\end{equation}
		for each \(\matr{m} \in \{0,1\}^N \setminus \{\mathbf{0},\mathbf{1}\}\).\footnote{For example, if \(N = 3\) and \(\matr{m} = (1,0,1)\), then \(I_0(\matr{m}) = \{2\}\) and \(I_1(\matr{m}) = \{1,3\}\).} Note that
		\begin{equation}
			I_0(\matr{m}) \neq \emptyset, \qquad I_1(\matr{m}) \neq \emptyset,
		\end{equation}
		and define
		\begin{equation}
			w(\matr{m}) = \max_{\substack{k,l \\ k \in I_0(\matr{m}) \\ l \in I_1(\matr{m})}} w_{kl}.
		\end{equation}
		Since \(\matr{W}\) is irreducible, it follows that \(w(\matr{m}) > 0\).
		\vs
		
		Now, note that if we prove the lemma for \(\delta_0 \in (0,\delta)\), then it also holds for \(\delta\). Therefore, w.l.o.g., let 
		\begin{equation}
			\delta_0 = \min \left\{\min_{\matr{m}}\alpha w(\matr{m}),\frac{1-\alpha}{2-\alpha}, \frac{\epsilon}{2(2^N-2)}, \delta\right\},
		\end{equation}
		and let
		\begin{equation}
			\epsilon_0 = \frac{\epsilon}{2(2^N-1)}.
		\end{equation}
		For fixed \(\matr{m}\), choose \(k,l\) such that \(w_{kl} = w(\matr{m})\). Note that
		\begin{equation}
			C_t^{\matr{m}}(\delta_0) = \bigcap_{i=1}^N C_{t,i}^{(m_i)}(\delta_0) \subseteq C_{t,k}^{(0)}(\delta_0) \cap C_{t,l}^{(1)}(\delta_0),
		\end{equation}
		which, together with Lemma \ref{lem:twocorners}, means that there exists a time \(t_{\epsilon_0}\) such that
		\begin{equation}\label{eq:lemma6-bound}
			\P\paren{C_t^{\matr{m}}(\delta_0)} \leq \P\paren{C_{t,k}^{(0)}(\delta_0) \cap C_{t,l}^{(1)}(\delta_0)} \leq \delta_0 + \epsilon_0
		\end{equation}
			for all \(t \geq t_{\epsilon_0}\) and \(\matr{m} \in \{0,1\}^N\). Consequently,
		\begin{equation}\label{eq:lemma6-finalform}
		\begin{aligned}
			&\P\paren{C_t(\delta_0)} = \P\paren{\bigcup_{\matr{m}\in \{0,1\}^N} C_t^{\matr{m}}(\delta_0)}\\
			= {} & \sum_{\matr{m} \in \{0,1\}^N} \P\paren{C_t^{\matr{m}}(\delta_0)}\\
			= {} & \P\paren{C_t^{\matr{0}}(\delta_0)} + \P\paren{C_t^{\matr{1}}(\delta_0)}
			+ \sum_{\substack{\matr{m} \in \{0,1\}^N \\ \matr{m} \neq \{\mathbf{0},\mathbf{1}\}}} \P\paren{C_t^{\matr{m}}(\delta_0)}\\
			\leq {} & \P\paren{C_t^{\matr{0}}(\delta_0)} + \P\paren{C_t^{\matr{1}}(\delta_0)} + (2^N-2)(\delta_0 + \epsilon_0),
		\end{aligned}
		\end{equation}
		where the second equality follows from \(\delta_0 \leq \frac{1-\alpha}{2-\alpha} < \frac{1}{2}\), so that the union is over disjoint sets. We apply Lemma \ref{lem:final}, by which there exists a time \(t_{\delta_0}\) such that
		\begin{equation}\label{eq:lemma6-bound2}
			\P\paren{C_t(\delta_0)} > 1 - \epsilon_0
		\end{equation}
		whenever \(t \geq t_{\delta_0}\). Let \(t_{\epsilon,\delta} = \max\{t_{\epsilon_0},t_{\delta_0}\}\) so that \eqref{eq:lemma6-bound} and \eqref{eq:lemma6-bound2} holds simultaneously for all \(t \geq t_{\epsilon,\delta}\). In view of \eqref{eq:lemma6-finalform}, we obtain, for all such \(t\),
		\begin{equation}
		\begin{aligned}
			&\P\paren{C_t^{\matr{0}}(\delta_0)} + \P\paren{C_t^{\matr{1}}(\delta_0)}\\
			\geq {} & \P\paren{C_t(\delta_0)} - (2^N-2)(\delta_0 + \epsilon_0)\\
			> {} & 1 - \epsilon_0 - (2^N-2)(\delta_0 + \epsilon_0)\\
			= {} &  1 - (2^N-2)\delta_0 - (2^N-1)\epsilon_0\\
			\geq {} & 1 - \frac{\epsilon}{2} - \frac{\epsilon}{2} = 1 - \epsilon.
		\end{aligned}
		\end{equation}
	\end{proof}
	\hrule
	\vs
	
	\begin{proof}[Proof of Lemma \ref{lem:prod-cont-decr}]
		%	By direct evaluation we have \(g(0) = 1\) (since all factors become \(1\)) and \(g(1) = 0\) (since the first factor becomes \(0\)). 
		For all \(\alpha \in (0,1)\), \(\gamma \in [0,1]\) and \(s \in \naturals\) we have 
		\begin{equation}\label{eq:fs-arg-bound}
			0<1-(1-\alpha)^s \gamma\leq 1.
		\end{equation}
		Hence, for all \(s \in \naturals\), each of the functions
		\begin{equation}
			f_s(\gamma)=\ln(1-(1-\alpha)^s \gamma)
		\end{equation}
		is defined, continuous and decreasing on \([0,1]\), and the infinite product can be written as 
		\begin{equation}\label{eq:log-def}
			\prod_{s=0}^{\infty} (1-(1-\alpha)^s \gamma)^N = N(1-\gamma )^N \exp\left( \sum_{s=1}^{\infty}f_s(\gamma)\right).
		\end{equation}
		For all \(\alpha \in (0,1), \gamma \in [0,1]\) and \(s \in \naturals\), it is easy to show that  \eqref{eq:fs-arg-bound} implies
		\begin{equation}\label{eq:help-lowerbound}
			\frac{(1-\alpha)^s\gamma }{ (1-\alpha)^s\gamma-1}\geq 
			-\frac{\gamma}{\alpha}(1-\alpha)^s. 
		\end{equation}
		Using \eqref{eq:help-lowerbound} and the inequality \(\ln(x) \geq 1 - 1/x\), which is valid for all positive \(x\), we can bound the functions \(f_s(\gamma)\) on the interval \([0,1]\) as follows:
		\begin{equation}
		\begin{aligned}
			0 \geq {} & f_s(\gamma) = \ln(1-(1-\alpha)^s \gamma)\\
			\geq {} & 1 - \frac{1}{1-(1-\alpha)^s\gamma}
			= \frac{(1-\alpha)^s\gamma}{(1-\alpha)^s\gamma -1}\\
			\geq {} & -\frac{\gamma}{\alpha}(1-\alpha)^s. 
		\end{aligned}
		\end{equation}
		Note that we have
		\begin{equation*}
			- \frac{\gamma}{\alpha}\sum_{s=1}^\infty (1-\alpha)^s = -\frac{\gamma (1-\alpha)}{\alpha^2},
		\end{equation*}
		and thus, by Weierstrass' M-test \cite[Theorem 7.10]{Rudin}, together with the uniform limit theorem \cite[Theorem 7.12]{Rudin} and the fact that each of the functions \(f_s(\gamma)\) is continuous and decreasing on the interval \([0,1]\), it follows that 
		\begin{equation}
			\sum_{s=1}^\infty f_s(\gamma) = f(\gamma),
		\end{equation}
		where \(f(\gamma)\) is a continuous and decreasing function on \([0,1]\).
		As the exponential function is continuous and strictly increasing, from \eqref{eq:log-def} we obtain the identity
		\begin{equation}
			\prod_{s=0}^{\infty} (1-(1-\alpha)^s \gamma)^N = (1-\gamma)^N \exp\left( f(\gamma)\right) = g_{\alpha,N}(\gamma),
		\end{equation}
		and since the nonnegative function \((1-\gamma)^N\) is continuous and decreasing on the interval \([0,1]\), so, too, is \(g_{\alpha,N}(\gamma)\). Finally, we observe that since \( f(1) \leq 0\) and \(f_s(0)=0\) for every \(s\in\naturals\), we have 
		\begin{equation}
			g_{\alpha,N}(0) = \exp\left(f(0)\right) = 1\mbox{ and } g_{\alpha,N}(1) = 0.
		\end{equation}
	\end{proof}
	\section*{Authors}
\begin{wrapfigure}{l}{25mm}
	\includegraphics[width=1in,height=1.25in,clip,keepaspectratio]{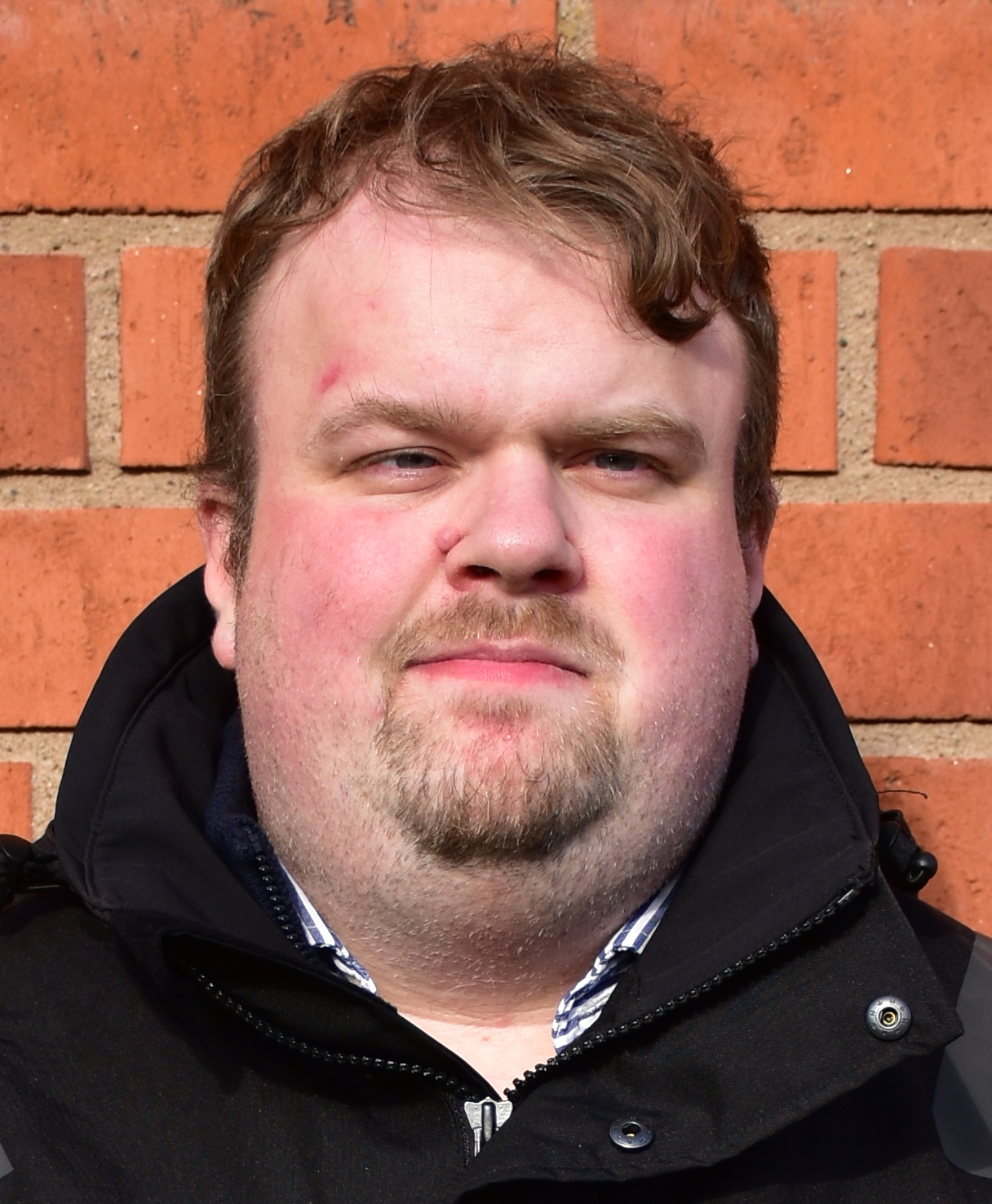}
\end{wrapfigure}\par
\textbf{Olle Abrahamsson} received his M.Sc. in Mathematics from Link\"oping University, Sweden, in 2017 and is currently pursuing his Licentiate degree in Electrical Engineering at Link\"oping University. His research interests are within the fields of network science and radar electronic warfare systems. He won the award for best poster presentation at the 2017 TAMSEC conference. Since 2023 he is a research scientist at the Swedish Defence Research Agency (FOI).

\begin{wrapfigure}{l}{25mm}
	\includegraphics[width=1in,height=1.25in,clip,keepaspectratio]{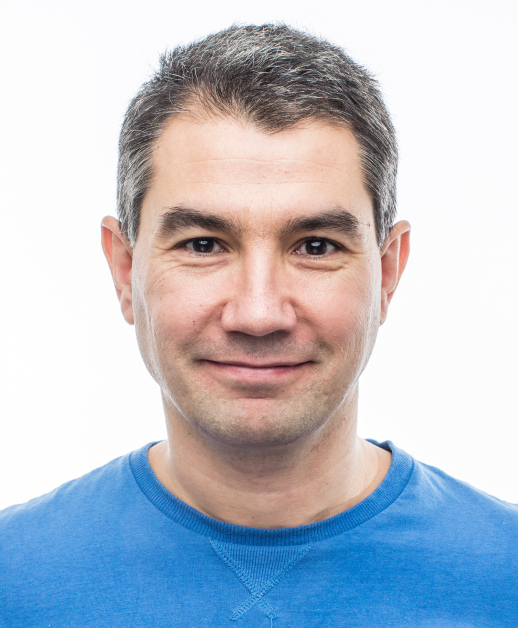}
\end{wrapfigure}\par
\textbf{Danyo Danev} received his M.Sc. in Mathematics from Sofia University, Bulgaria, in
1996 and his Ph.D. in Electrical Engineering from Link\"oping University, Sweden,
in 2001. In 2005 he obtained Docent title in Data Transmission. He is
currently Associate Professor at Linköping University.  His
research interests are within the fields of coding, information and
communication theory. He has authored or co-authored 2 book chapters,
16 journal papers and more than 30 conference papers on these
topics. He is currently teaching a number of communication
engineering and mathematics courses. Since 2012 he is board member
of the IEEE Sweden VT/COM/IT Chapter where from 2017 he is serving as treasurer.

\begin{wrapfigure}{l}{25mm}
	\includegraphics[width=1in,height=1.25in,clip,keepaspectratio]{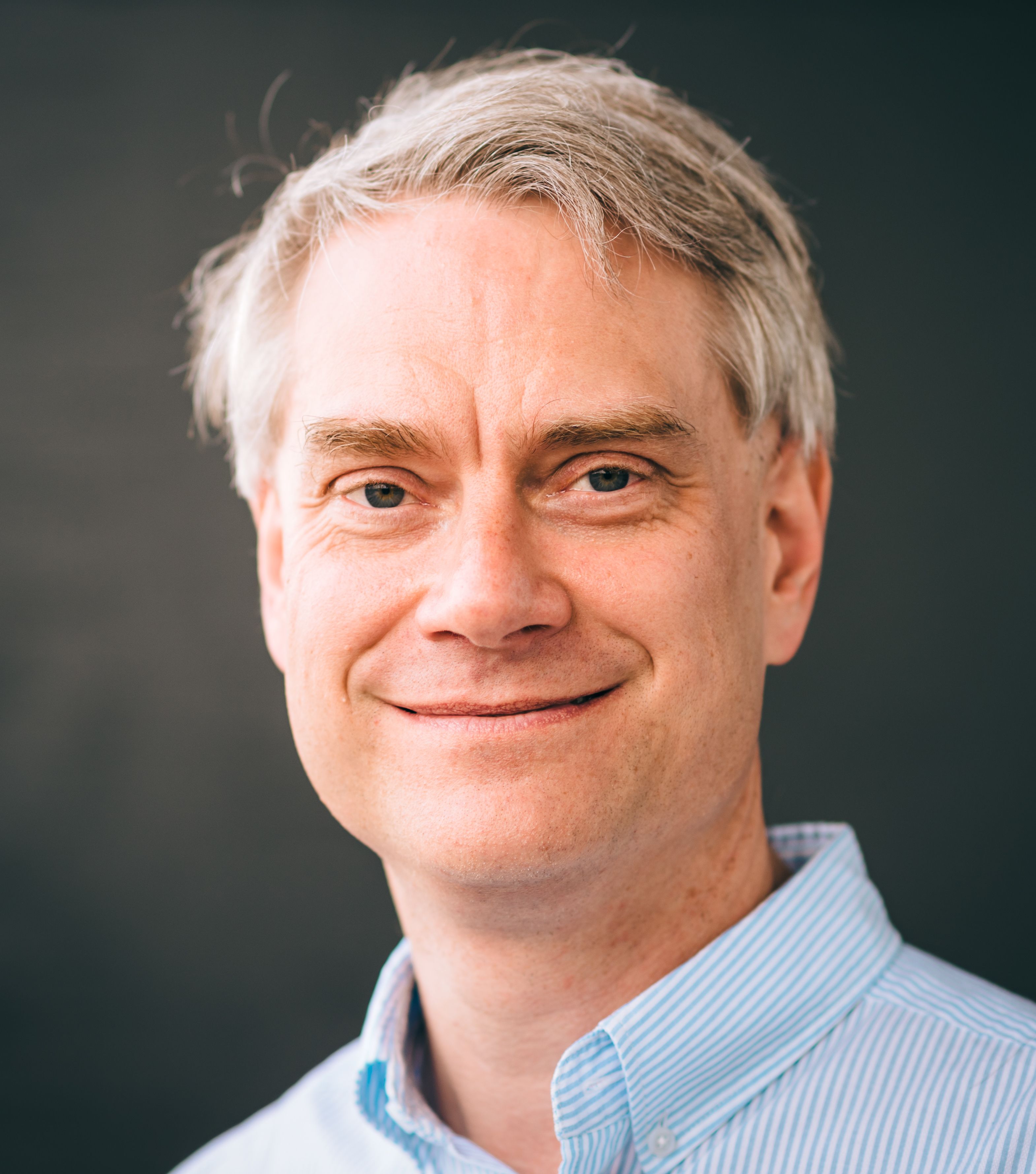}
\end{wrapfigure}\par
\textbf{Erik G. Larsson} (Fellow, IEEE) received the Ph.D. degree from Uppsala University,
Uppsala, Sweden, in 2002.  He is currently Professor of Communication
Systems at Link\"oping University (LiU) in Link\"oping, Sweden. He was
with the KTH Royal Institute of Technology in Stockholm, Sweden, the
George Washington University, USA, the University of Florida, USA, and
Ericsson Research, Sweden.  His main professional interests are within
the areas of wireless communications and signal processing. He 
co-authored \emph{Space-Time Block Coding for  Wireless Communications} (Cambridge University Press, 2003) 
and \emph{Fundamentals of Massive MIMO} (Cambridge University Press, 2016). 

He served as  chair  of the IEEE Signal Processing Society SPCOM technical committee (2015--2016), 
chair of  the \emph{IEEE Wireless  Communications Letters} steering committee (2014--2015), 
member of the  \emph{IEEE Transactions on Wireless Communications}    steering committee (2019-2022),
General and Technical Chair of the Asilomar SSC conference (2015, 2012), 
technical co-chair of the IEEE Communication Theory Workshop (2019), 
and   member of the  IEEE Signal Processing Society Awards Board (2017--2019).
He was Associate Editor for, among others, the \emph{IEEE Transactions on Communications} (2010-2014), 
the \emph{IEEE Transactions on Signal Processing} (2006-2010),
and  the \emph{IEEE Signal  Processing Magazine} (2018-2022).

He received the IEEE Signal Processing Magazine Best Column Award
twice, in 2012 and 2014, the IEEE ComSoc Stephen O. Rice Prize in
Communications Theory in 2015, the IEEE ComSoc Leonard G. Abraham
Prize in 2017, the IEEE ComSoc Best Tutorial Paper Award in 2018, 
the IEEE ComSoc Fred W. Ellersick Prize in 2019, and the
IEEE SPS Donald G. Fink Overview Paper Award in 2023.

He is a member of the Swedish Royal Academy of Sciences (KVA), and Highly Cited according to ISI Web of Science.
		\bibliographystyle{IEEEtran}
		\bibliography{IEEEabrv,RA-extended-paper} 
	\end{document}